\newtheorem{definition}{Definition}%[section]
\newtheorem{lemma}{Lemma}%[section]
\newtheorem{cor}{Corollary}[section]
\newtheorem{assumption}{Assumption}[section]
\newtheorem{remark}{Remark}[section]
\newtheorem{property}{Property}[section]
\newtheorem{theorem}{Theorem}
\newcommand{\cond}{\,\vert\,}
\newcommand{\defeq}{\triangleq}
\newcommand{\T}{{\scriptstyle\mathsf{T}}}
\renewcommand{\H}{{\scriptstyle\mathsf{H}}}
\newfont{\bbb}{msbm10 scaled 500}
\newfont{\bb}{msbm10 scaled 1100}
\newcommand{\CC}{\mbox{\bb C}}
\newcommand{\bv}{{\bf b}}
\newcommand{\ev}{{\bf e}}
\newcommand{\gv}{{\bf g}}
\newcommand{\hv}{{\bf h}}
\newcommand{\uv}{{\bf u}}
\newcommand{\vv}{{\bf v}}
\newcommand{\xv}{{\bf x}}
\newcommand{\yv}{{\bf y}}
\newcommand{\zv}{{\bf z}}
\newcommand{\zerov}{{\bf 0}}
\newcommand{\Am}{{\bf A}}
\newcommand{\Bm}{{\bf B}}
\newcommand{\Gm}{{\bf G}}
\newcommand{\Hm}{{\bf H}}
\newcommand{\Id}{{\bf I}}
\newcommand{\Mm}{{\bf M}}
\newcommand{\Sm}{{\bf S}}
\newcommand{\Cc}{{\cal C}}
\newcommand{\Gc}{{\cal G}}
\newcommand{\Hc}{{\cal H}}
\newcommand{\Nc}{{\cal N}}
\newcommand{\Tc}{{\cal T}}
\newcommand{\Phim}{\hbox{\boldmath$\Phi$}}
\newcommand{\Thetam}{\hbox{\boldmath$\Theta$}}
\newcommand{\diag}{{\hbox{diag}}}
\renewcommand{\det}{{\hbox{det}}}
\newcommand{\trace}{{\hbox{tr}}}
\newcommand{\rank}{{\hbox{rank}}}
\DeclareFontFamily{U}{cmfi}{}
\DeclareFontShape{U}{cmfi}{m}{n}{ <-> cmfi10 }{}
\DeclareSymbolFont{CMFI}{U}{cmfi}{m}{n}
\newcommand{\vu}{\pmb{u}}
\renewcommand{\vv}{\pmb{v}}
\newcommand{\taulog}{\mathsf{o}(\log P)}
\newcommand{\vA}{\vv_\textsf{A}}
\newcommand{\vB}{\vv_\textsf{B}}
\newcommand{\nA}{n_\textsf{A}}
\newcommand{\nB}{n_\textsf{B}}
\newcommand{\dA}{d_\textsf{A}}
\newcommand{\dB}{d_\textsf{B}}
\newcommand{\rvW}{{{W}}}
\newcommand{\rvx}{{{x}}}
\newcommand{\rvy}{{{y}}}
\newcommand{\rvz}{{{z}}}
\newcommand{\rvU}{\pmb{{u}}}
\newcommand{\rvV}{\pmb{{v}}}
\newcommand{\rvE}{\pmb{{e}}}
\newcommand{\rvY}{\pmb{{y}}}
\newcommand{\rvZ}{\pmb{{z}}}
\newcommand{\rvX}{\pmb{{x}}}
\newcommand{\rvB}{\pmb{{b}}}
\newcommand{\rvH}{\pmb{{H}}}
\renewcommand{\Thetam}{ {\bf{{\Theta}}}}
\newcommand{\ThetamA}{\Thetam_{\textsf{A}}}
\newcommand{\ThetamB}{\Thetam_{\textsf{B}}}
\renewcommand{\Phim}{ {\bf{{\Phi}}} }
\newcommand{\PhimA}{\Phim_{\textsf{A}}}
\newcommand{\PhimB}{\Phim_{\textsf{B}}}
\newcommand{\He}{\pmb{\mathsf{\widehat{H}}}}
\newcommand{\Ge}{\pmb{\mathsf{\widehat{G}}}}
\newcommand{\WA}{\rvW_\textsf{A}}
\newcommand{\WB}{\rvW_\textsf{B}}
\newcommand{\WAhat}{\hat{\rvW}_\textsf{A}}
\newcommand{\WBhat}{\hat{\rvW}_\textsf{B}}
\newcommand{\dotle}{\ \dot{\le}\  }
\newcommand{\dotge}{\ \dot{\ge}\  }
\renewcommand{\Am}{\pmb{A}}
\renewcommand{\Bm}{\pmb{B}}
\renewcommand{\Gm}{\pmb{G}}
\renewcommand{\Hm}{\pmb{H}}
\renewcommand{\Mm}{\pmb{M}}
\renewcommand{\Sm}{\pmb{S}}
\renewcommand{\bv}{\pmb{b}}
\renewcommand{\ev}{\pmb{e}}
\renewcommand{\gv}{\pmb{g}}
\renewcommand{\hv}{\pmb{h}}
\renewcommand{\uv}{\pmb{u}}
\renewcommand{\vv}{\pmb{v}}
\renewcommand{\xv}{\pmb{x}}
\renewcommand{\yv}{\pmb{y}}
\renewcommand{\zv}{\pmb{z}}
\begin{document}

\title{Secrecy Degrees of Freedom of MIMO Broadcast Channels with
Delayed CSIT}

\author{
\thanks{Manuscript submitted to IEEE Transactions on Information Theory in December 2011.}
Sheng~Yang,~\IEEEmembership{Member,~IEEE,} Mari~Kobayashi,~\IEEEmembership{Member,~IEEE,} 
Pablo~Piantanida,~\IEEEmembership{Member,~IEEE,} \thanks{S. Yang, M. Kobayashi, and P. Piantanida are with the Telecommunications department of SUPELEC,  3 rue Joliot-Curie, 91190 Gif-sur-Yvette, France.~(e-mail: \texttt{\{sheng.yang, mari.kobayashi, pablo.piantanida\}@supelec.fr})}
Shlomo~Shamai~(Shitz),~\IEEEmembership{Fellow,~IEEE} 
\thanks{S. Shamai~(Shitz) is with Technion-Israel Institute of Technology, Haifa, Israel.~(e-mail: \texttt{sshlomo@ee.technion.ac.il})}
\thanks{This work was partially supported by the framework of the FP7 Network of Excellence in Wireless Communications NEWCOM++.}
}

\maketitle

\begin{abstract}
The degrees of freedom~(DoF) of the two-user Gaussian multiple-input and multiple-output~(MIMO) broadcast channel with confidential message~(BCC) is studied under the assumption that delayed channel state information (CSI) is available at the transmitter.  We characterize the optimal secrecy DoF~(SDoF) region and show that it can be achieved by a simple artificial noise alignment (ANA) scheme. The proposed scheme sends the confidential messages superposed with the artificial noise over several time slots. Exploiting delayed CSI, the transmitter aligns the transmit signal in such a way that the useful message can be extracted at the intended receiver but is completely drowned by the artificial noise at the unintended receiver. The proposed scheme can be interpreted as a non-trivial extension of Maddah-Ali Tse (MAT) scheme and enables us to quantify the resource overhead, or equivalently the DoF loss, to be paid for the secrecy communications.
\end{abstract}
\newpage

%%%%%%%%%%%%%%%%%%%%%%%%%%%%%%%%%%%%%%%%%%%%%%%%%%%%%%%%%%
\section{Introduction}
We consider the two-user Gaussian multi-input multi-output broadcast channel with confidential messages (MIMO-BCC), where the transmitter sends two confidential messages to receivers~\textsf{A}~and~\textsf{B}, respectively,  while keeping each of them secret to the unintended receiver.  By letting $m$, $\nA$, and $\nB$ denote the number of antennas at the transmitter, receiver~\textsf{A}, and receiver~\textsf{B}, respectively, the corresponding channel outputs are given by
\begin{subequations}
\begin{align}
    \yv_{t} &= \Hm_{t} \xv_{t} + \ev_{t}, \\
    \zv_{t} &= \Gm_{t}  \xv_{t} + \bv_{t}, \quad t=1, 2, \ldots, n,
\end{align}
\end{subequations}
where $(\yv_{t}, \zv_{t})$ denotes the observations
at the receiver \textsf{A} and \textsf{B}, respectively,  at time
instant $t$; $\Hm_t\in \Hc \subseteq \CC^{\nA\times m},\Gm_{t}\in\Gc
\subseteq \CC^{\nB\times m}$ are the associated channel matrices;
$(\ev_{t},\bv_{t})$ are assumed to be independent and identically
distributed (i.i.d.) additive white Gaussian noises
$\sim\Nc_{\Cc}(\zerov,\Id)$; the input vector $\xv_t\in\CC^{m\times 1}$
is subject to the average power constraint 
\begin{align}
  \label{PowerConstraint} \frac{1}{n}\sum\limits_{t=1}^n \trace (
  \xv_{t}  \xv_{t}^\H  ) \leq P.  
\end{align}  
Furthermore, as in \cite{maddah2010degrees}, we assume any arbitrary
stationary fading process such that  $\Hm_{t}$ and $\Gm_{t}$ are
mutually independent and change from an instant to another one in an
independent manner. Note that the channel at hand boils down to the
conventional Gaussian MIMO wiretap channel  where the transmitter wishes
to send one message to the intended receiver while keeping it secret to the other one, namely, the eavesdropper. 

The secrecy capacity region of the two-user MIMO Gaussian BCC with
perfect channel state information at transmitter~(CSIT)and receivers
has been characterized in \cite{liu2010multiple} (see also references
therein). As a special case, the Gaussian MIMO wiretap channel has been extensively studied in
\cite{liu2009note,khisti2010secure,khisti2010secure2,oggier2011secrecy,immse09}.
However, the secrecy capacity of the MIMO Gaussian wiretap channel with
general~(imperfect) CSI at the transmitter remains open. Since a complete characterization
of the capacity region in this case is very difficult~(if not impossible), a number of 
contributions have focused on the so-called secrecy degrees of freedom~(SDoF), by
capturing the behavior in high signal-to-noise~(SNR) regime~(see
\cite{yingbin2009compound, khisti2011interference,
kobayashi2009compound, kobayashi2010secrecy} and references therein).
References \cite{yingbin2009compound, khisti2011interference,
kobayashi2009compound} investigated the compound models where channel
uncertainty at the encoder is modeled as a set of finite channel states, while 
\cite{kobayashi2010secrecy} investigated the scenario
where the transmitter knows some temporal structure of the
block-fading processes. A fundamental
observation is that unless two channels enjoy asymmetric 
statistical properties\footnote{This may be in terms of asynchronous
fading variation, different fading speed, number of antennas, etc..},
the perfect secrecy cannot be guaranteed under a general CSIT assumption. 
In other words, if the statistics of the underlying channels seen by
both receivers are symmetrical, additional side information~(not
necessarily instantaneous CSIT) is essential to ensure a positive SDoF,
by introducing some asymmetry at the encoder. As a matter of fact, this
reveals one of the major limitations of the wiretap model whose
performance strongly depends on the quality of the channel state
information at the transmitter side. Evidently, theoretically addressing
CSI issues is of fundamental impact for secrecy systems. 

Recently, in the context of multi-antenna broadcast channel, the
pioneering work~\cite{maddah2010degrees} showed that completely outdated
channel state information at the transmitter is still very useful and
increases the degrees of freedom of the multi-user channel.  Motivated
by this exciting result, the new assumption, commonly referred to as
delayed CSIT, has since been applied to several multi-user settings,
including the MIMO broadcast channel, X channel, and interference
channel~\cite{vaze2010degrees,maleki2010retrospective,vaze2011degrees,abdoli2011degrees}.
Non-trivial gain of degrees of freedom have been shown in all these
settings with delayed CSIT. The main idea behind the utility of delayed
CSIT can be best described with the term ``retrospective interference
alignment'' introduced in \cite{maleki2010retrospective} and
\cite{FandT_Jafar}. That is, the knowledge of causal channel state is
used to align the interference between users into a
spatial/temporal subspace with a reduced dimension at each receiver. 

In this paper, we study the impact of delayed CSIT on the secrecy
degrees of freedom in a MIMO broadcast channel. 
In our setting, delayed CSI of a given receiver is available both at the transmitter and the other
receiver\footnote{Unless it is explicitly mentioned, we assume that
delayed CSI of both channels is available to the transmitter, i.e., it
observes $\Hm^{t-1}$ and $\Gm^{t-1}$ for every $t=1,2,\dots$.}, whereas
each receiver knows its own instantaneous channel. Such a scenario is of
practical interest since the receivers may send their channel states to
the transmitter via delayed feedback links that may be
overheard by the other receivers. 
We first characterize the optimal SDoF of the Gaussian MIMO wiretap
channel with delayed CSIT. It is shown that delayed CSIT can
significantly improve the SDoF, provided that the number of transmit
antennas is larger than that of receive antennas, i.e.,
$m>\max(\nA,\nB)$. In this case, we prove that a simple artificial noise
alignment (ANA) scheme achieves the optimal SDoF. The proposed scheme
sends the confidential symbols embedded by the artificial noise in such a way
that the artificial noise is aligned in a subspace at the legitimate
receiver while it fills the full signal space at the eavesdropper.  
The case of partial knowledge where the transmitter has delayed CSI only on
the legitimate channel is also investigated. In this case, we show that
a strictly smaller SDoF is achieved compared to the case with delayed
CSIT on both channels. Then, we consider the two-user Gaussian MIMO-BCC
and characterize the optimal SDoF region. The achievability follows from
an artificial noise alignment scheme adapted to convey two confidential messages. The proposed
scheme can be seen as a non-trivial extension of the
Maddah-Ali~Tse~(MAT) scheme. A simple comparison with the MAT scheme enables
us to quantify the resource overhead, or equivalently the DoF loss, to
be paid to guarantee the confidentiality of messages. Although delayed
CSIT is found beneficial for a large range of transmit antennas analogy
to the conclusions drawn for other network systems without secrecy
constraints
\cite{maddah2010degrees,maleki2010retrospective,vaze2011degrees}, we
remark that the lack of perfect CSIT significantly degrades the
performance of the secrecy systems. 

The rest of the paper is organized as follows. Section
\ref{section:Preliminaries} introduces the assumptions and some useful
lemmas while Section~\ref{section:MainResults} summarizes our main results
on the optimal SDoF. Sections \ref{section:Proof-wiretap}~and~\ref{section:Proof-bcc} are devoted to
proof of the main theorems. Finally, the paper is concluded in Section
\ref{section:Conclusions} with some open problems and future perspectives.

%%%%%%%%%%%%%%%%%%%%%%%%%%%%%%%%%%%%%%%%%%%%%%%%%%%%%%%%%%%%%%%%%%%%%%%%%%%%%
%%%%%%%%%%%%%%%%%%%%%%%%%%%%%%%%%%%%%%%%%%%%%%%%%%%%%%%%%%%%%%%%%%%%%%%%%%%%%
\section{Notations, Definitions, and Assumptions} \label{section:Preliminaries}
\subsection{Notations}
Boldface lower-case letters $\vv$ and upper-case letters $\Mm$ are used
to denote vectors and matrices, respectively.
We use the superscript notation $X^n$ to
denote a sequence $(X_1,\ldots,X_n)$ for any type of variables.  Matrix
transpose, Hermitian transpose, inverse, trace, and determinant are
denoted by $\Am^\T$, $\Am^{\H}$, $\Am^{-1}$, $\trace( \Am)$, and
$\det(\Am)$, respectively. We let $\diag(\{\Am_t\}_{t})$ denote the
block diagonal matrix with the matrices $\Am_t$ as diagonal elements. Logarithm
is in base $2$ unless otherwise is specified. The differential entropy
of $X$ is denoted by $h(X)$. $(x)^+$ means $\max\left\{ 0, x
\right\}$. The little-o notation $\taulog$ stands for any real-valued function $f(P)$ such that
${\displaystyle \lim_{P\to \infty}} \frac{f(P)}{\log P}=0$. The dot equality means 
the equality on the ``pre-log'' factor, i.e., $f(P) \doteq g(P)$ is
equivalent to  
$f(P) = g(P) + \taulog$; the dot inequalities $\dot{\ge}$ and $\dot{\le}$
are similarly defined. 

\subsection{Assumptions and Definitions}

The following assumptions and definitions will be applied in the rest of
the paper. 

\begin{definition}[channel states]
  The channel matrices $\Hm_t$ and $\Gm_t$ are called the states of the
  channel at instant $t$. For simplicity, we also define the state
  matrix $\Sm_t$ as $\Sm_t = \left[ \begin{smallmatrix} \Hm_t \\ \Gm_t
  \end{smallmatrix} \right]. $
\end{definition}

\begin{assumption}[delayed CSIT]\label{assuption:CSI}
At each time $t$, the states of the past $\Sm^{t-1}$ are known to all terminals. However, the instantaneous states $\Hm_t$ and $\Gm_t$ are only known
to the respective receivers. 
\end{assumption}

Under these assumptions, we define the code and the optimal SDoF region summarized below. 
\begin{definition}[code and SDoF region] A code for the Gaussian
  MIMO-BCC with delayed CSIT consists of:
\begin{itemize}
\item A sequence of stochastic encoders given by 
  $$
  \{F_t: \mathcal{W}_{\textsf{A}} \times \mathcal{W}_{\textsf{B}} \times
  \mathcal{H}^{t-1} \times \mathcal{G}^{t-1} \longmapsto
  \CC^{m}\}_{t=1}^n,
  $$ 
  where the messages $\WA$ and $\WB$ are uniformly distributed over
  $\mathcal{W}_{\textsf{A}}$ and $\mathcal{W}_{\textsf{B}}$, respectively. 
\item The decoder A is given by the mapping $\WAhat: \mathbb{C}^{\nA
  \times n} \times \mathcal{H}^n \times  \mathcal{G}^{n-1} \longmapsto
  \mathcal{W}_{\textsf{A}}$.
\item The decoder B is given by the mapping $\WBhat: \mathbb{C}^{\nB
  \times n} \times \mathcal{H}^{n-1} \times  \mathcal{G}^n \longmapsto \mathcal{W}_{\textsf{B}}$.
\end{itemize}
A SDoF pair $(\dA,\dB)$ is said {\it achievable} if there exists a code that satisfies 
 the reliability conditions at both receivers
\begin{align}
  & \lim_{P\to \infty} \liminf_{n\to \infty} \frac{\log
  |\mathcal{W}_{\textsf{A}}(n,P)|}{n\log P}\geq \dA,
  \;\;\lim_{P\to\infty}\limsup\limits_{n\to \infty} \Pr\left\{ \WA\neq \hat{\WA}\right\}= 0, \\
  & \lim_{P\to \infty} \liminf_{n\to \infty} \frac{\log
  |\mathcal{W}_{\textsf{B}}(n,P)|}{n\log P}\geq \dB,
  \;\;\lim_{P\to\infty}\limsup\limits_{n\to \infty} \Pr\left\{ \WB\neq \hat{\WB}\right\}= 0, 
\end{align}
as well as the perfect secrecy condition 
\begin{align}  \label{eq:Constraint1-bcc}
& \lim_{P\to \infty} \limsup_{n\to \infty} \frac{I(\WA;\rvZ^n, \Sm^{n})}{n\log P}=0,\\  \label{eq:Constraint2-bcc}
& \lim_{P\to \infty} \limsup_{n\to \infty} \frac{I(\WB;\rvY^n, \Sm^{n})}{n\log P}=0. 
\end{align}
The union of all achievable pairs $(\dA,\dB)$ is called the optimal SDoF region. 
\end{definition}

\begin{assumption}[channel symmetry]\label{assumption:independency}  
 At any instant $t$, the rows of the state matrix $\Sm_t$ are
 independent and identically distributed. Furthermore, we limit ourselves to the class of fading processes in which  
 the state matrix $\Sm_t$ has full rank $\min\left\{
m, \nA+\nB \right\}$ almost surely at any time instant $t$.\footnote{This assumption is used to prove the achievability although
the converse proof does not need such an assumption.} 
\end{assumption}

As direct consequences of the channel symmetry, we readily have that the marginal distributions of any antenna output are equal conditioned on the same previous observations and/or the source message. Namely, we have the following property. 
\begin{property}[channel output symmetry]\label{property:symmetry}
Let $\Omega_t = \left\{ \rvy_{1,t}, \ldots, \rvy_{\nA,t},
\rvz_{1,t}, \ldots, \rvz_{\nB,t} \right\}$ be the collection of random variables representing all
antenna outputs at time instant $t$. Then, for any subset
$\omega_{\mathcal{J}}$ and $\omega_{\mathcal{K}}$ of random variables in 
$\Omega_t$ satisfying $|\omega_{\mathcal{J}}|=|\omega_{\mathcal{K}}|$, we have 
\begin{align}
  \Pr(\omega_{\mathcal{J}} \cond \rvY^{t-1}, \rvZ^{t-1},U_t) &= \Pr(\omega_{\mathcal{K}} \cond \rvY^{t-1}, \rvZ^{t-1}, U_t)
\end{align}
for any random variables $U_t \leftrightarrow (\Hm_t,\Gm_t, W) \leftrightarrow \Omega_t$ with $t=\{1,\dots,n\}$   that  form a Markov chain.  
\end{property}

Using the fact that current channel outputs do not depend on the future channel
realizations, we can easily show that Property \ref{property:symmetry}
also holds when we add the conditioning on $\Sm^n$, namely,
\begin{align}
  h(\omega_{\mathcal{J}} \cond \rvY^{t-1}, \rvZ^{t-1}, \Sm^n,\rvW)  &= h(\omega_{\mathcal{K}} \cond \rvY^{t-1}, \rvZ^{t-1}, \Sm^n,\rvW).
\end{align}
In the following, we omit the conditioning on $\Sm^n$ for notation simplicity.

\subsection{Preliminaries}
For sake of clarity, we collect the results that will be used repeatedly in the rest of the paper. First, the following lemma is the direct consequences of the channel output
symmetry.  
\begin{lemma}[properties of channel symmetry]\label{lemma:jointvssingle}
The following inequalities hold under the {\it channel output symmetry} Property \ref{property:symmetry}:  
\begin{subequations}
  \begin{align}
    \min\{m,\nA+\nB\}\, h(\rvZ^n) &\dotge \nB\, h(\rvY^n ,\rvZ^n) \label{eq:tmp1},\\
    \min\{m,\nA+\nB\}\, h(\rvY^n) &\dotge \nA\, h(\rvY^n ,\rvZ^n),  \\ 
    \min\{m,\nA+\nB\}\, h(\rvZ^n) &\dotge \nB\, h(\rvY^n), \label{eq:tmp822}\\ 
       \min\{m,\nA+\nB\}\, h(\rvY^n) &\dotge \nA\, h(\rvZ^n). \label{eq:tmp823}
  \end{align}%
 \end{subequations}
  Furthermore, same inequalities hold true conditional on $W$. 
\end{lemma}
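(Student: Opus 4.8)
The plan is to reduce the four inequalities to the single inequality \eqref{eq:tmp1}. Since $h(\rvY^n,\rvZ^n)\ge h(\rvZ^n)$ and $h(\rvY^n,\rvZ^n)\ge h(\rvY^n)$, inequality \eqref{eq:tmp822} follows from \eqref{eq:tmp1}, and \eqref{eq:tmp823} follows from the second (unlabeled) inequality; that second inequality is in turn obtained from \eqref{eq:tmp1} by interchanging the roles of the two receivers ($\rvY\leftrightarrow\rvZ$, $\nA\leftrightarrow\nB$), which leaves the symmetric model invariant. The versions conditioned on $\rvW$ are proved identically after carrying $\rvW$ through every conditioning, since Property~\ref{property:symmetry} applies with $U_t=\rvW$. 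Thus I will only establish \eqref{eq:tmp1}, i.e.\ $N\,h(\rvZ^n)\dotge\nB\,h(\rvY^n,\rvZ^n)$, where $N\defeq\min\{m,\nA+\nB\}$.

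First I would pass to a slot-by-slot description. Setting $\Omega^{t-1}\defeq(\rvY^{t-1},\rvZ^{t-1})$, the chain rule gives $h(\rvY^n,\rvZ^n)=\sum_{t=1}^{n}h(\Omega_t\cond\Omega^{t-1})$. By Property~\ref{property:symmetry} the $\nA+\nB$ entries of $\Omega_t$ are exchangeable given $\Omega^{t-1}$ (and, implicitly, given $\Sm^n$), so the conditional entropy $g(k)\defeq h(\omega\cond\Omega^{t-1})$ of any $k$-element subset $\omega\subseteq\Omega_t$ depends only on $k$. The first main step is a rank argument: by Assumption~\ref{assumption:independency} the rows of $\Sm_t$ are i.i.d.\ and drawn from a continuous distribution, so almost surely any $N$ of them are linearly independent and span the $N$-dimensional column space of $\Sm_t$. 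Hence any $N$ of the antenna outputs linearly reconstruct the noise-free signal $\Sm_t\xv_t$, and therefore the remaining $\nA+\nB-N$ outputs up to additive unit-variance noise; each omitted output then contributes only $O(1)=\taulog$ differential entropy, giving $g(\nA+\nB)=g(N)+\taulog$, i.e.\ $h(\Omega_t\cond\Omega^{t-1})\doteq g(N)$.

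The second main step compares $g(\nB)$ with $g(N)$. Applying Han's inequality to the exchangeable collection $\Omega_t$ (conditioned on $\Omega^{t-1}$) shows that $g(k)/k$ is non-increasing in $k$; since $\nB\le N$ in the regime of interest, this yields $N\,g(\nB)\ge\nB\,g(N)$. Dropping part of the conditioning can only increase entropy, so $h(\rvZ^n)=\sum_t h(\rvZ_t\cond\rvZ^{t-1})\ge\sum_t h(\rvZ_t\cond\Omega^{t-1})=\sum_t g(\nB)$. Assembling,
\begin{align*}
N\,h(\rvZ^n)\ \ge\ N\sum_{t}g(\nB)\ \ge\ \nB\sum_{t}g(N)\ \doteq\ \nB\sum_{t}g(\nA+\nB)\ =\ \nB\,h(\rvY^n,\rvZ^n),
\end{align*}
which is \eqref{eq:tmp1}; the per-slot $\taulog$ terms are harmless because the SDoF normalization divides by $n\log P$.

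The hard part will be the rank step: one must argue rigorously that discarding the outputs beyond the $N$-dimensional signal space costs only $\taulog$. This needs that every set of $N$ rows of $\Sm_t$ be invertible almost surely---so that the retained outputs linearly determine the discarded noise-free outputs---and that the resulting residual, a fixed (given $\Sm^n$) linear combination of the Gaussian noises, have differential entropy $O(1)$ uniformly in $P$. A secondary point is that Han's inequality and the ``$g$ depends only on $k$'' identity must be invoked in their conditional forms with $(\Omega^{t-1},\Sm^n)$ held fixed, and that the comparison $g(\nB)/\nB\ge g(N)/N$ requires $\nB\le N$, i.e.\ $m\ge\nB$ (and symmetrically $m\ge\nA$ for the companion inequality), which holds in the relevant antenna regime $m\ge\max(\nA,\nB)$.
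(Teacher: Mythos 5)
Your proposal is correct, and its skeleton coincides with the paper's: reduce everything to \eqref{eq:tmp1} (the companion inequality by swapping the roles of the two receivers, and \eqref{eq:tmp822}--\eqref{eq:tmp823} by a noise-floor bound), then expand $h(\rvY^n,\rvZ^n)$ slot by slot with the chain rule, use the exchangeability of the $\nA+\nB$ antenna outputs given $(\rvY^{t-1},\rvZ^{t-1},\Sm^n)$ from Property~\ref{property:symmetry}, discard the $\nA+\nB-\min\{m,\nA+\nB\}$ excess outputs at a per-slot cost of $O(1)$ via the MMSE-estimation argument, and finish with ``conditioning reduces entropy.'' The one genuine difference is the central subset-entropy comparison: where you invoke Han's inequality for the conditionally exchangeable collection $\Omega_t$ to conclude that your $g(k)/k$ is non-increasing in $k$, the paper proves this from scratch as Lemma~\ref{lemma:essential}, namely $M\,h(\rvx^N)\ge N\,h(\rvx^M)$ for entropy-symmetric vectors, established by induction on $L$ together with the diminishing-returns inequality \eqref{eq:ess1}. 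These are the same mathematical fact --- Lemma~\ref{lemma:essential} part \eqref{eq:ess2} is exactly Han's inequality specialized to entropy-symmetric collections, and Han's proof uses only the chain rule and ``conditioning reduces entropy,'' so it remains valid for differential entropy and in the conditional form you need --- but citing Han buys a shorter, more recognizable argument, while the paper's induction buys self-containedness. Two smaller points: your reduction asserts $h(\rvY^n,\rvZ^n)\ge h(\rvY^n)$, which for differential entropy requires the justification the paper supplies, $h(\rvZ^n\cond \rvY^n)\ge h(\rvZ^n\cond \rvY^n,\rvX^n)=h(\rvB^n)$ (here genuinely nonnegative thanks to the unit-variance Gaussian noise), so make that explicit; and your flag that the comparison $g(\nB)/\nB\ge g(N)/N$ needs $\nB\le\min\{m,\nA+\nB\}$ is well taken --- the lemma's statement leaves this tacit, but the paper's own proof has the same requirement (Lemma~\ref{lemma:essential} needs $M\ge N$), and the inequalities are only ever invoked in regimes where it holds. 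Finally, your reading of Assumption~\ref{assumption:independency} as guaranteeing that any $\min\{m,\nA+\nB\}$ rows of $\Sm_t$ are almost surely linearly independent matches how the paper itself uses that assumption in the achievability proof, so the rank step you single out as ``the hard part'' demands nothing beyond what the paper already assumes.
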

\begin{proof}
  The first two inequalities are proved in Appendix
  \ref{appendix:lemma-jointvssingle}. 
To prove \eqref{eq:tmp822}, from \eqref{eq:tmp1}, we have
  \begin{align}
    h(\rvZ^n) & \dotge \frac{\nB}{\min\{m,\nA+\nB\}} h(\rvY^n, \rvZ^n) \\
    &\ge \frac{\nB}{\min\{m,\nA+\nB\}} h(\rvY^n),
  \end{align}%
  where the last inequality
  comes from the fact that $h(\rvZ^n \cond
  \rvY^n)\geq h(\rvZ^n \cond
  \rvY^n, \rvX^n) = h(\rvB^n) = \taulog$. Same steps can be applied to obtain
  \eqref{eq:tmp823}.  
\end{proof}
Then, all the achievable DoF results are essentially based on the rank of the channel
matrices. 
\begin{lemma}
For any matrix $\Am$ which does not depend on $P$, we have
\begin{align}
\lim_{P\to \infty} \frac{\log \det(\Id + P \Am\Am^\H)}{\log P} = \rank(\Am). 
\end{align}
\end{lemma}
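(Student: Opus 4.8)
The plan is to reduce everything to the eigenvalues of the Hermitian positive-semidefinite matrix $\Am\Am^\H$ and then evaluate the limit factor by factor. First I would invoke the spectral theorem: since $\Am\Am^\H$ is Hermitian and positive semidefinite, it admits a unitary diagonalization $\Am\Am^\H = \Um\, \diag(\lambda_1,\ldots,\lambda_N)\, \Um^\H$, where $N$ is the dimension of $\Am\Am^\H$ and $\lambda_1 \ge \cdots \ge \lambda_N \ge 0$ are its real nonnegative eigenvalues. The crucial structural fact is that the number of strictly positive eigenvalues equals $\rank(\Am\Am^\H) = \rank(\Am) =: r$, so exactly $r$ of the $\lambda_i$ are positive and the remaining $N-r$ vanish. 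None of these quantities depends on $P$ by hypothesis.

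Next I would use unitary invariance of the determinant together with $\det(\Id + P\Am\Am^\H) = \det(\Id + P\,\diag(\lambda_1,\ldots,\lambda_N))$ to write
\begin{align}
  \det(\Id + P\,\Am\Am^\H) = \prod_{i=1}^{N} (1 + P\lambda_i) = \prod_{i=1}^{r} (1 + P\lambda_i),
\end{align}
where the second equality drops the $N-r$ factors equal to $1$ coming from the zero eigenvalues. Taking logarithms turns the product into a sum, giving
\begin{align}
  \frac{\log \det(\Id + P\,\Am\Am^\H)}{\log P} = \sum_{i=1}^{r} \frac{\log(1 + P\lambda_i)}{\log P}.
\end{align}

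Finally I would pass to the limit $P \to \infty$ term by term. For each index $i \le r$ we have $\lambda_i > 0$ fixed, so $\log(1 + P\lambda_i) = \log P + \log(\lambda_i + 1/P) \to \log P + \log \lambda_i$ up to vanishing terms, whence $\log(1+P\lambda_i)/\log P \to 1$. Summing the $r$ limits yields $r = \rank(\Am)$, as claimed. Since the sum is finite and each summand converges, no interchange-of-limit subtlety arises. The only point requiring a moment of care—and the nearest thing to an obstacle here—is the bookkeeping of the zero eigenvalues: one must confirm that they contribute exactly $r$ nonzero eigenvalues via $\rank(\Am\Am^\H)=\rank(\Am)$, so that the surviving factors number precisely $\rank(\Am)$ and the trivial factors are correctly discarded.
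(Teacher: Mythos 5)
Your proof is correct and follows essentially the same route as the paper: the paper works with the nonzero singular values $\sigma_1,\ldots,\sigma_r$ of $\Am$, writing $\log\det(\Id+P\Am\Am^\H)=\sum_{k=1}^r\log(1+P\sigma_k^2)\doteq r\log P$, which is exactly your eigenvalue decomposition of $\Am\Am^\H$ with $\lambda_k=\sigma_k^2$. Your treatment is just slightly more explicit about the spectral theorem, the discarded unit factors, and the term-by-term limit.
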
\vspace{2mm}

\begin{proof}
  Let $(\sigma_1,\ldots,\sigma_r)$ be the $r\defeq\rank(\Am)$ non-zero
  singular values of $\Am$. Then, we have that 
  $$
  \log\det(\Id+P\Am\Am^\H)=
  \sum\limits_{k=1}^r \log(1+P\sigma_k^2) \doteq r \log P,
  $$ 
  since the non-zero singular values do not depend on $P$ and thus do not vanish with $P$ either. 
\end{proof}

%%%%%%%%%%%%%%%%%%%%%%%%%%%%%%%%%%%%%%%%%%%%%%%%%%%%%%%%%%%%%%%%%%%%%%%%%%%%%
%%%%%%%%%%%%%%%%%%%%%%%%%%%%%%%%%%%%%%%%%%%%%%%%%%%%%%%%%%%%%%%%%%%%%%%%%%%%%
\section{Main Results}\label{section:MainResults}

In this section, we highlight our main results on the optimal SDoF of
the Gaussian MIMO wiretap channel and then on the more general Gaussian
MIMO broadcast channel with confidential messages. We shall interpret the results through comparisons and numerical evaluations.
 \subsection{Wiretap Channel}
 
\begin{theorem}[wiretap channel with delayed CSIT]\label{thm:SDoF}
In presence of delayed CSIT on both the legitimate channel and the eavesdropper channel, the optimal SDoF of the Gaussian MIMO wiretap channel with $m, \nA, \nB$ antennas at the transmitter, the legitimate receiver, the eavesdropper, respectively, is given by 
\begin{equation}\label{SDoF-H}
  d_s(\nA,\nB,m)= \begin{cases}
0, &  m \leq \nB,\\
\displaystyle{m-\nB}, & \nB < m \leq  \nA,\\ 
\displaystyle{\frac{\nA m(m-\nB)}{\nA\nB + m(m-\nB)}}, &  \max\{\nA,\nB\} < m \leq {\nA + \nB},\\
\displaystyle{\frac{\nA (\nA+\nB)}{ \nA+2\nB}}, & m> \nA+\nB.
\end{cases}
\end{equation}
\end{theorem}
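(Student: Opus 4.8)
The plan is to prove the theorem as a matching pair of bounds: a converse showing that $d_s(\nA,\nB,m)$ upper-bounds any achievable $\dA$, and an achievability argument built on the artificial-noise-alignment (ANA) scheme. I would organize both around the four antenna regimes, since the binding constraint changes across them. For the converse, the first move is to reduce the secrecy rate to a single conditional mutual information. Starting from $n\dA\log P\dotle H(\WA)$, Fano's inequality (decoder~A sees $\rvY^n$, and conditioning on the larger $\Sm^n$ only helps) together with $\WA\perp\Sm^n$ gives $H(\WA)\dotle I(\WA;\rvY^n|\Sm^n)$. I would then hand the eavesdropper output to the legitimate receiver as a genie and invoke the secrecy constraint $I(\WA;\rvZ^n,\Sm^n)=\taulog$, yielding
$$n\dA\log P\dotle I(\WA;\rvY^n\cond\rvZ^n,\Sm^n)=h(\rvY^n\cond\rvZ^n)-h(\rvY^n\cond\rvZ^n,\WA),$$
where I suppress $\Sm^n$. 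For $m\le\nA$ the crude estimate $I(\WA;\rvY^n\cond\rvZ^n)\le h(\rvY^n\cond\rvZ^n)\dotle\min\{\nA,m-\nB\}\,n\log P$ already settles the first two cases of \eqref{SDoF-H}, giving $0$ when $m\le\nB$ and $m-\nB$ when $\nB<m\le\nA$.

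For $m>\nA$ the crude dimension bound is loose, and the point is to retain the subtracted term. Setting $s_0\defeq\min\{m,\nA+\nB\}$ and $\mu\defeq\min\{\nA,m-\nB\}$, I would combine three ingredients: (i) the dimension bound $h(\rvY^n)\dotle\nA\,n\log P$; (ii) the conditional form of Lemma~\ref{lemma:jointvssingle}, namely $s_0\,h(\rvY^n\cond\WA)\dotge\nA\,h(\rvZ^n\cond\WA)$; and (iii) a sharpened inequality
$$\mu\,h(\rvZ^n\cond\WA)\dotge\nB\,I(\WA;\rvY^n),$$
which quantifies the artificial-noise entropy the eavesdropper must absorb per unit of secret rate. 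Chaining (iii) into (ii) lower-bounds $h(\rvY^n\cond\WA)$, and since $n\dA\log P\doteq h(\rvY^n)-h(\rvY^n\cond\WA)$, combining with (i) gives $\dA(1+\nA\nB/(s_0\mu))\le\nA$, i.e.
$$\dA\le\frac{\nA\,s_0\,\mu}{s_0\,\mu+\nA\nB},$$
which specializes exactly to the third and fourth cases of \eqref{SDoF-H}.

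The main obstacle is establishing the sharpened inequality (iii). The ordinary symmetry relations of Lemma~\ref{lemma:jointvssingle} together with the secrecy equality $h(\rvZ^n)\doteq h(\rvZ^n\cond\WA)$ are provably insufficient: chaining them only lower-bounds $h(\rvY^n\cond\WA)$ by the factor $\nA\nB/s_0^{2}$, which yields the strictly weaker bound $\dA\le\nA\,(1-\nA\nB/s_0^{2})$ (for instance $3/4$ rather than $2/3$ at $\nA=\nB=1$, $m=2$). I therefore expect (iii) to require the per-slot channel output symmetry of Property~\ref{property:symmetry}, applied inductively along the time axis together with the secrecy constraint, so as to convert each increment of secret information delivered to receiver~A into a proportional (ratio $\nB/\mu$) increment of noise entropy observed at the eavesdropper; summing over slots then produces (iii).

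For achievability I would construct the ANA scheme: over a block of time slots transmit confidential streams superposed with artificial-noise streams, and exploit the delayed CSI to retrospectively align the noise so that its footprint collapses to $\mu=\min\{\nA,m-\nB\}$ dimensions at receiver~A—allowing A to zero-force the noise subspace and decode—while it fills the full $\nB$-dimensional space at the eavesdropper, forcing $I(\WA;\rvZ^n,\Sm^n)=\taulog$. Optimizing the number of confidential versus artificial-noise streams and the block length over a MAT-like two-phase structure should reproduce each case of \eqref{SDoF-H}. The delicate points on this side are guaranteeing alignment at~A and full-rank flooding at~B \emph{simultaneously}, and the decodability bookkeeping across the two phases.
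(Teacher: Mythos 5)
Your converse skeleton is essentially the paper's converse in disguise, but it rests on a mis-assessment that leaves its crux unproved. Your ``sharpened inequality (iii)'' does \emph{not} require any new per-slot induction: after noting that $\tilde{m}-\nB=\mu$ whenever $m>\max\{\nA,\nB\}$ (with $\tilde{m}=\min\{m,\nA+\nB\}=s_0$), inequality (iii) is exactly the paper's first bound, obtained in three lines from tools you declared insufficient: $n(R-\taulog)\le I(\rvW;\rvY^n)-I(\rvW;\rvZ^n)\le I(\rvW;\rvY^n\cond\rvZ^n)\le h(\rvY^n\cond\rvZ^n)=h(\rvY^n,\rvZ^n)-h(\rvZ^n)\dotle \frac{\tilde{m}-\nB}{\nB}h(\rvZ^n)\doteq\frac{\mu}{\nB}h(\rvZ^n\cond\rvW)$, where the penultimate step is the \emph{joint}-entropy inequality \eqref{eq:tmp1} of Lemma~\ref{lemma:jointvssingle} and the last step is the secrecy equality. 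Your ``provably insufficient'' argument only chained the marginal-to-marginal inequalities \eqref{eq:tmp822}--\eqref{eq:tmp823} (which indeed yields only $\nA(1-\nA\nB/s_0^2)$); you overlooked that Lemma~\ref{lemma:jointvssingle} also controls $h(\rvY^n,\rvZ^n)$ against $h(\rvZ^n)$, and that this is strictly stronger when routed through the genie term $h(\rvY^n\cond\rvZ^n)$. With (iii) so established, your chaining of (i)--(iii) is algebraically identical to the paper's max--min over $\alpha=h(\rvY^n)$, $\beta=h(\rvZ^n)$ of the two bounds $\frac{\tilde{m}-\nB}{\nB}\beta$ and $\alpha-\frac{\nA}{\tilde{m}}\beta$ in \eqref{min}--\eqref{innerprob}, so the converse would then be complete. (For the case $m\le\nB$, note the paper does not use a dimension count but an MMSE translation argument, \eqref{translation}--\eqref{eq:tmp888}, to show $h(\rvY^n\cond\rvZ^n)=n\,\taulog$; your $\min\{\nA,m-\nB\}$ bound needs an argument of this type to be rigorous.)

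The achievability side has a concrete structural gap: a ``MAT-like two-phase structure'' cannot reach the claimed SDoF. The optimal scheme (Table~\ref{table:scheme}) needs \emph{three} phases: phase~1 sends pure artificial noise; phase~2 superposes $m\tau_2$ fresh confidential symbols with $\Thetam\tilde{\yv}_1$, the legitimate receiver's own past observation of that noise; and, crucially, phase~3 retransmits a linear map $\Phim\tilde{\zv}_2$ of the \emph{eavesdropper's} phase-2 observation, which supplies receiver~\textsf{A} the missing equations while exposing nothing new to~\textsf{B} (those rows are linear combinations of rows \textsf{B} already has, cf.\ \eqref{repeat}). Drop that third phase---which is what a two-phase scheme amounts to---and the confidential symbols must be decodable within phase~2 alone, capping their number at $\nA\tau_2$ instead of $m\tau_2$ and the SDoF at $\frac{\nA(m-\nB)}{m}$, i.e., precisely the delayed \emph{partial}-CSIT rate of Theorem~\ref{thm:asymmetric} (for $\nA=\nB=1$, $m=2$: $1/2$ rather than $2/3$). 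Moreover, ``optimizing the stream split'' is not free-form: the durations must solve the alignment bookkeeping exactly, $\tau_1=\nA\nB$, $\tau_2=\nA(m-\nB)$, $\tau_3=(m-\nA)(m-\nB)$, and the simultaneous requirements you flagged---rank $m\nA(m-\nB)$ at \textsf{A} and rank deficiency giving $I(\rvV;\rvZ)\doteq 0$ at \textsf{B}---are certified in the paper by explicit permuted block-diagonal choices of $\Thetam$ and $\Phim$ as in \eqref{MatrixB} and \eqref{MatrixA}, together with Assumption~\ref{assumption:independency}; your sketch names these difficulties but does not resolve them.
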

\vspace{4mm}

In the wiretap setting, it is not always reasonable to assume any CSI on
the eavesdropper channel at the transmitter side. In this case, we may
consider delayed CSIT only on the legitimate channel and without CSIT on
the eavesdropper channel. With this asymmetric CSI assumption, hereafter
referred to as delayed partial CSIT, we can show that a strictly positive
yet smaller SDoF than delayed CSIT on both channels is still achievable
for a wide range of number of antennas. 

\begin{theorem}[wiretap channel with delayed partial CSIT] \label{thm:asymmetric}
In presence of delayed partial CSIT, either on the legitimate channel or the eavesdropper channel, the following SDoF is achievable for MIMO Gaussian wiretap channel for $m>\max\{\nA, \nB\}$
\begin{equation}
  d_s^{\rm partial}(\nA,\nB,m)= \begin{cases}
\displaystyle{\frac{\nA (m-\nB)}{m}}, &  \max\{\nA,\nB\} < m \leq {\nA + \nB}, \\
\displaystyle{\frac{\nA^2}{ \nA+\nB}}, & m> \nA+\nB.
\end{cases}
\end{equation}
\end{theorem}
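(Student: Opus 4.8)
The plan is to prove achievability with a single artificial noise alignment (ANA) scheme, parametrized so that both rate expressions fall out as special cases. Throughout I take the delayed CSI to be on the legitimate channel $\Hm$ (the complementary case is treated at the end). Set the effective number of transmit antennas and the block length to $T=\min\{m,\nA+\nB\}$, and over each block of $T$ slots I would deliver $d=\nA(T-\nB)$ confidential symbols superposed with artificial Gaussian noise. Writing $\xv_t=\Vm_t\sv_t+\Qm_t\uv_t$ with $\sv_t$ the data and $\uv_t$ the noise, the two design goals are: (i) at the eavesdropper the noise fills all $\nB T$ received dimensions of the block, and (ii) at the legitimate receiver the noise, after exploiting the delayed knowledge of $\Hm$, collapses into only $\nA\nB$ dimensions, leaving $\nA T-\nA\nB=\nA(T-\nB)$ dimensions free for data. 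Granting (i)--(ii), the SDoF is $d/T=\nA(1-\nB/T)=\frac{\nA(T-\nB)}{T}$, which equals $\frac{\nA(m-\nB)}{m}$ when $m\le\nA+\nB$ and $\frac{\nA^2}{\nA+\nB}$ when $m>\nA+\nB$, matching the claimed values and joining continuously at $m=\nA+\nB$.

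First I would construct the precoders realizing (i)--(ii). For (i), since $\Gm_t$ has i.i.d. rows and full rank almost surely (Assumption \ref{assumption:independency}), it suffices that the columns of $\Qm_t$ carry at least $\nB$ channel-independent directions per slot; isotropic noise then projects onto the full $\nB$-dimensional space at the eavesdropper for \emph{every} generic $\Gm$, which is exactly why secrecy needs no eavesdropper CSI. For (ii), I would use a retrospective, two-sub-phase construction: in the first sub-phase transmit fresh noise and data blindly; then, having learned the past realizations of $\Hm$ (and of the noise, which the transmitter itself generated), transmit in the second sub-phase a deterministic function of the past noise, precoded through the delayed $\Hm$, so that the legitimate receiver can reconstruct and subtract the noise it overheard. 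The crucial point, and the source of the A/B asymmetry, is that this ``help'' is tailored to $\Hm$: seen through the independent generic channel $\Gm$, it does \emph{not} cancel the eavesdropper's earlier noise, so the eavesdropper stays saturated while the legitimate receiver is freed.

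With the scheme in place I would verify the two requirements. For secrecy, write $I(\WA;\rvZ^n,\Sm^n)=h(\rvZ^n\mid\Sm^n)-h(\rvZ^n\mid\WA,\Sm^n)$; the first term is at most $\nB n\log P+\taulog$ trivially, while goal (i) forces the second to be at least $\nB n\log P+\taulog$, since conditioned on the message the eavesdropper still observes full-rank artificial noise of power $\sim P$ in all $\nB n$ dimensions (the channel-output symmetry Property \ref{property:symmetry} and Lemma \ref{lemma:jointvssingle} supply the matching entropy bounds). Hence the leakage is $\taulog$, as required. For reliability I would show that, after the retrospective cancellation, the legitimate receiver faces a per-block linear system of $\nA T$ equations in which the $d=\nA(T-\nB)$ data symbols are separated from the residual $\nA\nB$-dimensional noise; genericity of $\Hm$ makes the relevant matrices invertible almost surely, so $\WA$ is recovered with vanishing error at DoF $d/T$.

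Finally, for delayed CSI on the eavesdropper channel rather than the legitimate one, I would set up the dual construction achieving the same $\frac{\nA(T-\nB)}{T}$: since isotropic noise already saturates the unknown eavesdropper, the delayed $\Gm$ is instead spent to retrospectively strip the data that leaked at the eavesdropper while delivering fresh data to the legitimate receiver. The hard part, on which I expect to spend most of the effort, is exactly this retrospective step: exhibiting precoders that compress the noise to \emph{precisely} $\nA\nB$ dimensions at the intended receiver using only delayed CSI, while proving via the saturation/entropy argument above that the very same repetitions leak nothing to the unintended receiver. Making this simultaneous ``align here, saturate there'' behaviour hold with all dimension counts tight and the secrecy gap provably $\taulog$ is the crux of the theorem.
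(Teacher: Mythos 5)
Your high-level route is the paper's: reduce to $T=\min\{m,\nA+\nB\}$ effective transmit antennas, use an artificial-noise resource plus a retrospective phase built on the delayed legitimate channel, and aim for noise occupying $\nA\nB$ dimensions at receiver A while filling all $\nB T$ dimensions at B. But the scheme as you narrate it has a concrete flaw in the phase ordering. You send ``fresh noise and data blindly'' in the first sub-phase and reserve the second sub-phase for a pure deterministic retransmission of the past noise. For your saturation goal (i), the fresh noise must supply about $\nB(a+b)$ independent streams during the $a$ noise-bearing slots (hence up to $m$ per slot), and since the transmitter has no current CSIT it cannot steer the data away from that noise at receiver A, which has only $\nA<m$ dimensions per slot: the sub-phase-1 data is unrecoverable until A learns $\tilde{\Hm}_1\uv$ from sub-phase 2. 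Delivering those $\nA a$ quantities requires $\nA b\ge \nA a$ fresh observations, i.e.\ $b\ge a$ data-free slots, which caps the rate at $\nA a/(a+b)\le \nA/2$ --- strictly below $\nA(m-\nB)/m$ whenever $m>2\nB$ --- and the two requirements $b\ge a$ and $ma\ge\nB(a+b)$ are incompatible when $m<2\nB$. The paper's arrangement (Table~\ref{table:scheme}, phases 1--2 only) is the fix, and your target counts silently presuppose it: phase 1 carries noise alone, so that A's phase-1 observation \emph{is} $\tilde{\Hm}_1\uv$, and phase 2 superposes the fresh data on the recycled noise, $\xv_2=\vv+\Thetam\tilde{\Hm}_1\uv$; the phase-2 slots then do double duty (A cancels the noise term it already knows and decodes $\nA$ symbols per slot), giving $\nA\tau_2=\nA^2(m-\nB)$ confidential symbols over $\tau_1+\tau_2=\nA m$ slots. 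Your general template $\xv_t=\Vm_t\sv_t+\Qm_t\uv_t$ can encode this, but your stated ordering does not deliver $d=\nA(T-\nB)$ per $T$ slots.

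The second problem is that you explicitly defer what you yourself call the crux --- exhibiting precoders that simultaneously collapse the noise at A and keep B saturated --- yet that is essentially the entire content of the paper's proof: the explicit block-diagonal choice of $\Thetam$ in \eqref{MatrixA} and the rank identity $\rank\left[\begin{smallmatrix}\tilde{\Gm}_1\\ \tilde{\Gm}_2\Thetam\tilde{\Hm}_1\end{smallmatrix}\right]=m\nA\nB$, which states that the noise map at B over both phases is square (note $m\tau_1=\nB(\tau_1+\tau_2)$) and invertible, so that conditioned on $\rvV$ the eavesdropper's observation is an invertible function of $\rvU$ and hence $I(\rvV;\rvZ)\doteq 0$. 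Your argument that ``at least $\nB$ channel-independent noise directions per slot'' suffice covers only fresh-noise slots; in the retransmission slots the transmitted signal is a function of noise B has \emph{already} observed, and linear independence of B's new equations from its old ones is exactly what must be proved (a generic $\Thetam$ would also work almost surely, but some rank argument still has to be run --- the paper gives an explicit deterministic one). Two smaller points: Property~\ref{property:symmetry} and Lemma~\ref{lemma:jointvssingle} are converse tools and play no role in the achievability leakage bound, which is a pure rank computation; and your closing ``dual construction'' for delayed CSI on the eavesdropper channel only is asserted in one sentence with no dimension counting --- note that the paper's written proof treats only the legitimate-channel case, so that direction is genuinely nontrivial and cannot be dispatched by a symmetry remark.
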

\vspace{4mm}
Note that it is the best known achievable SDoF in this setting, although the converse is yet to be proved.

In order to quantify the benefit of delayed CSIT, we summarize the SDoF
with perfect, delayed and without CSIT in Table \ref{table:SDoF} and
provide an example with $\nA=3, \nB=2$ in Fig. \ref{fig:SDoF}. We remark
that delayed CSIT is beneficial only when the number of transmit
antennas is larger than the number of receive antennas, i.e.,
$m>\max\{\nA,\nB\}$, since the SDoF is $(m-\nB)^+$ for $m\leq
\max\{\nA,\nB\}$ with perfect, delayed, and without CSIT.  As the number
$m$ of transmit antennas increases, the SDoF grows until $m=\nA+\nB$ for
perfect and delayed CSIT while it does not increase with $m$ without
CSIT.  It appears that with both perfect and delayed CSIT, we cannot
exploit any gain for $m$ beyond $\nA+\nB$. Furthermore, we remark that
delayed CSIT only on the legitimate channel incurs a non-negligible loss
compared to delayed CSIT on both channels. This is because the transmitter without CSI on the eavesdropper channel cannot access to the signal overheard by the eavesdropper, which reduces the signal dimension to be exploited by the legitimate receiver.

\begin{table}[h]
\caption{Comparison of the SDoF under different CSIT assumptions for $m> \max\{\nA,\nB\} $.}
\centering
\begin{tabular}{|c|c|c|}
  \hline
CSIT &  $ \max\{\nA,\nB\}< m < \nA+ \nB $ & $m\geq  \nA+\nB$\\ \hline
perfect & $m-\nB$ & $\nA$\\
delayed &$ \frac{\nA m(m-\nB)}{\nA\nB + m(m-\nB)}$ & $\frac{\nA (\nA+\nB)}{ \nA+2\nB}$\\
delayed partial & $\frac{\nA (m-\nB)}{m}$ & $\frac{\nA^2}{ \nA+\nB}$ \\
no & $(\nA-\nB)^+$ &  $(\nA-\nB)^+$  \\
  \hline
\end{tabular}
\label{table:SDoF}
\end{table}
\begin{figure}[h]
  \centering
\includegraphics[width=0.6\textwidth,clip=]{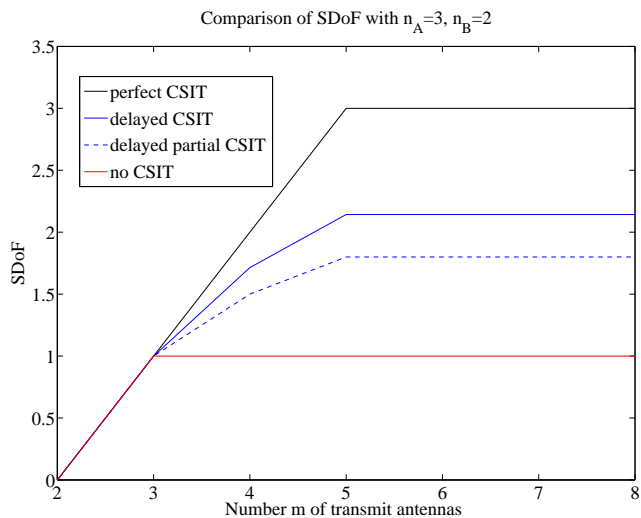}
\caption{SDoF with $\nA=3$ and $\nB=2$ with perfect, delayed, and no CSIT.}
\label{fig:SDoF}
\end{figure}

\subsection{Broadcast channel with confidential messages}

Next, we present the achievable SDoF region of the two-user MIMO-BCC
with delayed CSIT. 
\begin{theorem}[BCC with delayed CSIT]\label{thm:ubBCC}
The optimal SDoF region $\mathcal{R}_{\textrm{BCC}}$ of the two-user
MIMO-BCC with delayed CSIT is given as a set of non-negative
$(\dA,\dB)$ satisfying
\begin{subequations}
\begin{align} 
\frac{\dA}{d_s(\nA,\nB,m)} + \frac{\dB}{ \min\{m, \nA+\nB\}} \leq 1, \label{BCC-inq1}\\
\frac{\dA}{ \min\{m, \nA+\nB\}} + \frac{\dB}{d_s(\nB,\nA,m)} \leq 1, \label{BCC-inq2}
\end{align}  
\end{subequations}
for any $m>\max\{\nA,\nB\}$. If $\nB<m\leq \nA$, we have $\dA\leq m-\nB$
and $\dB=0$, whereas if $\nA<m\leq \nB$, we have  $\dA=0$ and $\dB\leq m-\nA$. 
\end{theorem}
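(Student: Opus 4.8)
The plan is to prove both the converse (outer bound) and the achievability (inner bound) for the region $\mathcal{R}_{\textrm{BCC}}$. Since the problem is completely symmetric under the relabeling $\textsf{A}\leftrightarrow\textsf{B}$ (which swaps $\nA\leftrightarrow\nB$ and sends $d_s(\nA,\nB,m)\mapsto d_s(\nB,\nA,m)$), I would establish only the first inequality \eqref{BCC-inq1} and obtain \eqref{BCC-inq2} by this symmetry. The degenerate cases are immediate: when $\nB<m\leq\nA$ one has $d_s(\nB,\nA,m)=0$, so \eqref{BCC-inq2} forces $\dB=0$ while \eqref{BCC-inq1} reduces to $\dA\leq d_s(\nA,\nB,m)=m-\nB$, and symmetrically for $\nA<m\leq\nB$. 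Throughout I write $M\defeq\min\{m,\nA+\nB\}$ and work in the pre-log (dot) notation, suppressing the conditioning on $\Sm^n$ as in Property~\ref{property:symmetry}.

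For the converse, I would start from Fano's inequality at each receiver combined with the perfect-secrecy constraints. Reliability and secrecy of $\WA$ (decoded at \textsf{A}, hidden from \textsf{B}) give $n\dA\log P\dotle I(\WA;\rvY^n)=h(\rvY^n)-h(\rvY^n\cond\WA)$, where \eqref{eq:Constraint1-bcc} pins $h(\rvZ^n\cond\WA)\doteq h(\rvZ^n)$. Reliability of $\WB$, after handing $\WA$ to receiver \textsf{B} as a genie, gives $n\dB\log P\dotle I(\WB;\rvZ^n\cond\WA)=h(\rvZ^n\cond\WA)-h(\rvZ^n\cond\WA,\WB)$. The next step is to tie the conditional entropies together with Lemma~\ref{lemma:jointvssingle}, using in particular its conditional versions (e.g. $M\,h(\rvY^n\cond\WA)\dotge\nA\,h(\rvZ^n\cond\WA)$ and $M\,h(\rvZ^n)\dotge\nB\,h(\rvY^n)$) together with the antenna bounds $h(\rvY^n)\dotle n\nA\log P$, $h(\rvZ^n)\dotle n\nB\log P$, $h(\rvY^n,\rvZ^n)\dotle nM\log P$. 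I would then form the weighted combination with weights $1/d_s(\nA,\nB,m)$ and $1/M$; the choice of $d_s$ as the coefficient is forced precisely so that setting $\dB=0$ collapses the bound onto the single-user wiretap converse of Theorem~\ref{thm:SDoF}. This is the delicate part: the naive linear program built only from the four symmetry inequalities and the antenna bounds is strictly loose (it over-estimates $\dA$ in the middle regime $\max\{\nA,\nB\}<m\le\nA+\nB$), so extracting the exact coefficient $d_s$ requires the same refined entropy bookkeeping that underlies Theorem~\ref{thm:SDoF}, in which $h(\rvY^n\cond\WA)$ must be lower-bounded in a way that reflects the ``artificial-noise overhead'' imposed by secrecy rather than just the raw dimension counting.

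For achievability I would build on the artificial noise alignment (ANA) scheme and exhibit the two axis corner points and the single dominant ``sum-rate'' corner where the two bounding lines meet; every other boundary point then follows by time-sharing. The two axis points $(d_s(\nA,\nB,m),0)$ and $(0,d_s(\nB,\nA,m))$ are delivered by the single-user wiretap ANA scheme of Theorem~\ref{thm:SDoF}. For the sum-rate corner I would superpose both confidential streams together with artificial noise over a block of time slots and use the delayed CSIT to align signals so that, at each receiver, its intended stream is separable (by a rank/dimension argument that invokes the full-rank guarantee of Assumption~\ref{assumption:independency}) while the other user's stream is completely submerged by the artificial noise that fills that receiver's entire observable subspace, yielding $I(\WA;\rvZ^n)\doteq 0$ and $I(\WB;\rvY^n)\doteq 0$. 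This is the non-trivial extension of the MAT scheme of \cite{maddah2010degrees}: one must simultaneously arrange \emph{one-sided} alignment for decodability at each receiver and \emph{two-sided} drowning for secrecy, whereas MAT only needs the former.

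The main obstacle, I expect, is the achievability of the sum-rate corner: designing the joint ANA block so that the alignment pattern (number of dimensions devoted to each message and to the artificial noise per phase) balances the two reliability rank conditions against the two secrecy conditions exactly at the intersection of \eqref{BCC-inq1} and \eqref{BCC-inq2}, and then verifying with Assumption~\ref{assumption:independency} that the relevant aggregate matrices have the needed generic rank almost surely. On the converse side, the parallel difficulty is isolating the precise $d_s$ coefficient; I would handle it by first nailing down the wiretap converse of Theorem~\ref{thm:SDoF} in full and then reusing its core entropy inequality verbatim inside the weighted combination for \eqref{BCC-inq1}, so that the two converses share a single technical lemma rather than being argued independently.
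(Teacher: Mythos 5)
Your achievability plan is essentially the paper's: axis corner points from the single-user wiretap ANA scheme, a joint ANA block for the sum point (the paper realizes it as a four-phase scheme: artificial noise; $\vA$ superposed with the noise seen by \textsf{A}; $\vB$ superposed with the noise seen by \textsf{B}; repetition of the overheard observations $\PhimA\tilde{\zv}_2+\PhimB\tilde{\yv}_3$), and time-sharing. The genuine gap is in your converse, and it is not the one you diagnosed.

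Concretely, your decomposition points the genie in the wrong direction, and the linear program it generates provably cannot yield \eqref{BCC-inq1}. You bound $R_{\textsf{A}}$ by the \emph{unconditional} wiretap inequalities, $nR_{\textsf{A}} \dotle \min\bigl\{ \frac{\tilde{m}-\nB}{\nB}h(\rvZ^n),\ h(\rvY^n)-\frac{\nA}{\tilde{m}}h(\rvZ^n) \bigr\}$ with $\tilde{m}=\min\{m,\nA+\nB\}$, and $R_{\textsf{B}}$ by $I(\WB;\rvZ^n\cond\WA)=h(\rvZ^n\cond\WA)-h(\rvZ^n\cond\WA,\WB)$. Secrecy gives $h(\rvZ^n\cond\WA)\doteq h(\rvZ^n)$, and nothing above noise level lower-bounds $h(\rvZ^n\cond\WA,\WB)$, so this collapses to $\dB\dotle\beta$ with $\beta=h(\rvZ^n)/(n\log P)$: the two rate bounds share no common entropy term and can be saturated simultaneously. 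For $m=5$, $\nA=3$, $\nB=2$, take $\alpha=3$, $\beta=2$: your constraints admit $(\dA,\dB)=(9/5,\,2)$, and since $d_s(3,2,5)=15/7$ one gets $\frac{9/5}{15/7}+\frac{2}{5}=0.84+0.4=1.24>1$; in general your constraint set attains at least $1+\frac{\nA\nB}{\tilde{m}^2}>1$ for the left side of \eqref{BCC-inq1}. Your proposed remedy — reusing the single-user wiretap inequality of Theorem~\ref{thm:SDoF} verbatim inside the weighted combination — cannot help, because that inequality is already among your constraints.

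The missing idea is a cross-conditioning on the \emph{other} user's message, which couples the two bounds through the common term $h(\rvZ^n\cond\WB)$. The paper combines the secrecy constraint \eqref{eq:Constraint1-bcc} with Fano's inequality for $\WB$ at receiver \textsf{B} to obtain $I(\WA;\rvZ^n\cond\WB)\leq n\,\taulog$, and then derives $nR_{\textsf{A}} \dotle I(\WA;\rvY^n\cond\WB)-I(\WA;\rvZ^n\cond\WB) \leq h(\rvY^n\cond\rvZ^n,\WB) \dotle \frac{\tilde{m}-\nB}{\nB}h(\rvZ^n\cond\WB)$, using Lemma~\ref{lemma:jointvssingle} conditioned on $\WB$ (note: you only invoked conditioning on $\WA$). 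Paired with $nR_{\textsf{B}}\leq h(\rvZ^n)-h(\rvZ^n\cond\WB)$, the term $h(\rvZ^n\cond\WB)$ now appears with opposite signs, so a large $R_{\textsf{B}}$ forces it down and tightens the $R_{\textsf{A}}$ bound. Summing with weights $T_{\textsf{A}}=\nA\nB+\tilde{m}(\tilde{m}-\nB)$ and $\nA(\tilde{m}-\nB)$, together with the second (unconditioned) wiretap bound $h(\rvY^n)-\frac{\nA}{\tilde{m}}h(\rvZ^n)$ and a max--min argument, closes the bound to $\frac{\dA}{d_s(\nA,\nB,m)}+\frac{\dB}{\tilde{m}}\leq 1$; the symmetry step you planned then gives \eqref{BCC-inq2}, and your handling of the degenerate antenna ranges is fine.
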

\begin{cor}
For the case $m>\max\{\nA,\nB\}$, the SDoF region is characterized by
the two corner points $(0,d_s(\nB,\nA,m))$, $(d_s(\nA,\nB,m),0)$ and the sum SDoF point given by 
\begin{align}
 (\dA, \dB) =  \begin{cases}
\displaystyle{\left(\frac{\nA(m-\nB)}{m},\frac{\nB(m-\nA)}{m}\right)},&  \max\{\nA,\nB\} < m \leq {\nA + \nB} \\
\displaystyle{\left(\frac{\nA^2}{\nA+\nB},\frac{\nB^2}{\nA+\nB}\right)}, & m> \nA+\nB.
\end{cases}
\end{align}
\end{cor}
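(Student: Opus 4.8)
The plan is to treat $\mathcal{R}_{\textrm{BCC}}$ as a convex polygon and simply enumerate its extreme points. Since the region is the intersection of the two half-planes \eqref{BCC-inq1} and \eqref{BCC-inq2} with the non-negative orthant $\dA,\dB\ge 0$, it is a convex polytope, and a convex polytope is the convex hull of its vertices. Hence it suffices to locate the vertices: the origin, the intercepts of the active boundary on the two coordinate axes, and the point where the boundary lines of \eqref{BCC-inq1} and \eqref{BCC-inq2} cross. The corollary's claim is exactly that these vertices are $(d_s(\nA,\nB,m),0)$, $(0,d_s(\nB,\nA,m))$ and the stated sum point, so the whole task reduces to computing three points.

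First I would compute the axis intercepts. Setting $\dB=0$, the two constraints reduce to $\dA\le d_s(\nA,\nB,m)$ and $\dA\le\min\{m,\nA+\nB\}$, respectively. Because a secrecy degree of freedom can never exceed the degrees of freedom of the channel when the secrecy constraint is dropped, we have $d_s(\nA,\nB,m)\le\min\{m,\nA+\nB\}$, so \eqref{BCC-inq1} is the binding constraint and the intercept is the corner $(d_s(\nA,\nB,m),0)$. By the symmetric argument with $\dA=0$ and the roles of the two receivers interchanged, the other corner is $(0,d_s(\nB,\nA,m))$. Positivity of these corners is immediate for $m>\max\{\nA,\nB\}$, so both are genuine vertices.

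The remaining vertex is obtained by solving the $2\times2$ linear system in which \eqref{BCC-inq1} and \eqref{BCC-inq2} both hold with equality, and this is where the only real work lies. I would carry it out in the two regimes dictated by Theorem \ref{thm:SDoF}: for $\max\{\nA,\nB\}<m\le\nA+\nB$ one has $\min\{m,\nA+\nB\}=m$ together with $d_s(\nA,\nB,m)=\frac{\nA m(m-\nB)}{\nA\nB+m(m-\nB)}$ and its swap $d_s(\nB,\nA,m)$, while for $m>\nA+\nB$ one has $\min\{m,\nA+\nB\}=\nA+\nB$ with $d_s(\nA,\nB,m)=\frac{\nA(\nA+\nB)}{\nA+2\nB}$. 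Rather than invert the system explicitly, the cleanest route is to verify that the proposed coordinates satisfy both equalities: e.g. in the first regime $\frac{\dA}{d_s(\nA,\nB,m)}=\frac{\nA\nB+m(m-\nB)}{m^2}$ and $\frac{\dB}{m}=\frac{\nB(m-\nA)}{m^2}$ sum to $1$, and the analogous telescoping $\nA^2(\nA+2\nB)+\nB^2(\nA+\nB)=(\nA+\nB)^3$ handles the second regime. Since the two boundary lines have distinct slopes and the solution has strictly positive coordinates (again guaranteed by $m>\max\{\nA,\nB\}$), it lies in the open first quadrant and is the unique point at which both rate constraints are simultaneously active, hence the sum-SDoF-maximizing corner. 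The main obstacle is thus purely the bookkeeping in this algebraic simplification and the case split on $m$; I would present the substitution-check form above, which is short and self-verifying, instead of the explicit solution of the linear system.
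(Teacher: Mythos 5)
Your overall route matches the paper's: the corollary is not given a separate proof but is read off from Theorem~\ref{thm:ubBCC} exactly as you do, by computing the vertices of the polygon cut out by \eqref{BCC-inq1}, \eqref{BCC-inq2} and $\dA,\dB\ge0$ (axis intercepts plus the crossing of the two boundary lines), and the achievability argument in Section~\ref{section:Proof-bcc} independently confirms the crossing point by directly attaining $\bigl(\nA(\tilde m-\nB)/\tilde m,\ \nB(\tilde m-\nA)/\tilde m\bigr)$ with $\tilde m=\min\{m,\nA+\nB\}$ --- a unified form that collapses your two cases into one, since for $m>\nA+\nB$ it evaluates to $\bigl(\nA^2/(\nA+\nB),\ \nB^2/(\nA+\nB)\bigr)$. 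Your intercept argument and your first-regime check are correct: indeed $\frac{\nA\nB+m(m-\nB)}{m^2}+\frac{\nB(m-\nA)}{m^2}=1$.

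There is, however, one concrete error to fix: the identity you invoke for the regime $m>\nA+\nB$, namely $\nA^2(\nA+2\nB)+\nB^2(\nA+\nB)=(\nA+\nB)^3$, is false --- at $\nA=\nB=1$ the left side is $5$ and the right side is $8$. The correct verification in that regime is: with $d_s(\nA,\nB,m)=\frac{\nA(\nA+\nB)}{\nA+2\nB}$ and $\tilde m=\nA+\nB$,
\begin{align}
\frac{\dA}{d_s(\nA,\nB,m)}=\frac{\nA(\nA+2\nB)}{(\nA+\nB)^2},
\qquad
\frac{\dB}{\tilde m}=\frac{\nB^2}{(\nA+\nB)^2},
\end{align}
and the needed identity is $\nA(\nA+2\nB)+\nB^2=(\nA+\nB)^2$, which holds. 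Since the proposed point does satisfy both equalities, this is an algebra slip rather than a structural gap. One further step deserves a line of justification: you assert that the crossing is ``hence the sum-SDoF-maximizing corner,'' but this follows only after observing that the boundary slopes straddle $-1$: parametrizing \eqref{BCC-inq1} by $\dB$ gives $\dA+\dB=d_s(\nA,\nB,m)+\dB\bigl(1-d_s(\nA,\nB,m)/\tilde m\bigr)$, which is increasing toward the crossing because $d_s(\nA,\nB,m)<\tilde m$, and symmetrically along \eqref{BCC-inq2}; this also delivers the distinct-slopes claim, since $d_s(\nA,\nB,m)\,d_s(\nB,\nA,m)<\tilde m^2$.
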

\vspace{3mm}

\begin{remark}
We can find trivial upper bounds to the above SDoF region for the case
of $m>\max\{\nA,\nB\}$. On one hand, the SDoF region with delayed CSIT
is dominated by the SDoF region with perfect CSIT. 
The SDoF region with perfect CSIT is square connecting three corner
points $(\min\{\nA,m-\nB\},0)$, $(\min\{\nA,m-\nB\},\min\{m-\nA,\nB\})$,
and $(0,\min\{m-\nA,\nB\})$. We can also compare the above SDoF region
with delayed CSIT and the DoF region of the two-user MIMO-BC with delayed constraint \cite{vaze2010degrees}, given by 
\begin{subequations}
\begin{align} 
\frac{\dA}{\min\{m,\nA\}} + \frac{\dB}{\min\{m, \nA+\nB\}} &\leq 1,\\
\frac{\dA}{\min\{m, \nA+\nB\}} + \frac{\dB}{\min\{m,\nB\}} &\leq 1. 
\end{align}  
\end{subequations}
Obviously, since SDoF is always upper bounded by DoF of the MIMO channel, namely
$d_s(\nA, \nB, m)\leq  \min\{\nA,m\}$ and $d_s(\nB, \nA, m) \leq \min\{\nB, m\}$, the SDoF region is dominated by the DoF region. 
\end{remark}

We provide an insight to the proposed artificial noise alignment scheme which achieves the sum SDoF point  $\left(\frac{1}{2},\frac{1}{2}\right)$ over the two-user MISO-BCC. Let us consider the four-slot scheme where the transmitter sends six independent Gaussian distributed symbols  $\vu \defeq [u_1\ u_2]^\T$, $\vA \defeq
[v_{11} \ v_{12}]^\T$, $\vB \defeq [v_{21} \ v_{22}]^\T$ whose powers
scale equally with $P$. Specifically, the transmit vectors are
given by 
\begin{align}\label{4slot-BCC}
\xv_1 = \vu, \;\; \xv_2 = \vA+ 
\begin{bmatrix}
 \hv_1^{\T} \uv \\ 0 \end{bmatrix}, \xv_3 =  \vB  + \begin{bmatrix}\gv_1^{\T} \uv \\0 \end{bmatrix},
   \xv_4 =\begin{bmatrix}
   (\gv_2 ^\T \vA + g_{21}\hv_1^{\T} \uv) + (\hv_3 ^\T \vB + h_{31}\gv_1^{\T} \uv)\\
    0
    \end{bmatrix},
\end{align}
where, for simplicity of demonstration, we omit the scaling factors that
fulfill the power constraint \eqref{PowerConstraint}. Note that this
simplification, also adopted in \cite{maddah2010degrees} and other
related works, does not affect the high SNR analysis carried out here. 
The following remarks are in order. First, it can be easily shown that,
at receiver~\textsf{A}, $\vA$ lies in a two-dimensional subspace, while
the unintended signal $\vB$ plus the artificial noise are aligned in
another two-dimensional subspace. Thus, the intended message can be
recovered through $\vA$ from the four-dimensional observation at
receiver~\textsf{A}. Second, $\vA$ is drowned in the observation at
receiver~\textsf{B}. More precisely, at
receiver~\textsf{B}, $\vA$ is squeezed into a one-dimension
subspace filled with artificial noise, which makes it impossible to
recover any useful information about \textsf{A}.  
Due to the symmetry, the same holds for $\vB$.
Therefore, we can send simultaneously two confidential symbols to each receiver over
four slots, yielding the sum SDoF point $\left(\frac{1}{2},\frac{1}{2}\right)$. 

The four-slot scheme contains two special cases of interest. If we consider
the MISO wiretap channel where the transmitter wishes to convey $\vA$ to
receiver~\textsf{A} while keeping it secret to receiver~\textsf{B}, we let $\vB=\zerov$
and ignore the third time slot. This provides a SDoF of
$\frac{2}{3}$. If we consider the two-user MISO-BC without secrecy
constraint, we remove the artificial noise transmission by letting
$\uv=\zerov$ and ignoring the first time slot. This boils down to the MAT
scheme~\cite{maddah2010degrees}. The four-slot scheme as well as the more
general artificial noise alignment scheme presented in Section
\ref{section:Proof-bcc} is indeed a non-trivial
extension of retrospective interference alignment schemes for MIMO
broadcast channels~\cite{maddah2010degrees,vaze2010degrees} to 
secure communications. The comparison with the three-slot MAT scheme can
be interpreted as follows. The messages can be kept secret at a price of
an additional resource (one slot), which appears as a DoF loss with
respect to the communication systems without secrecy constraint. 

\begin{figure}
 \centering
\includegraphics[width=0.5\textwidth,clip=]{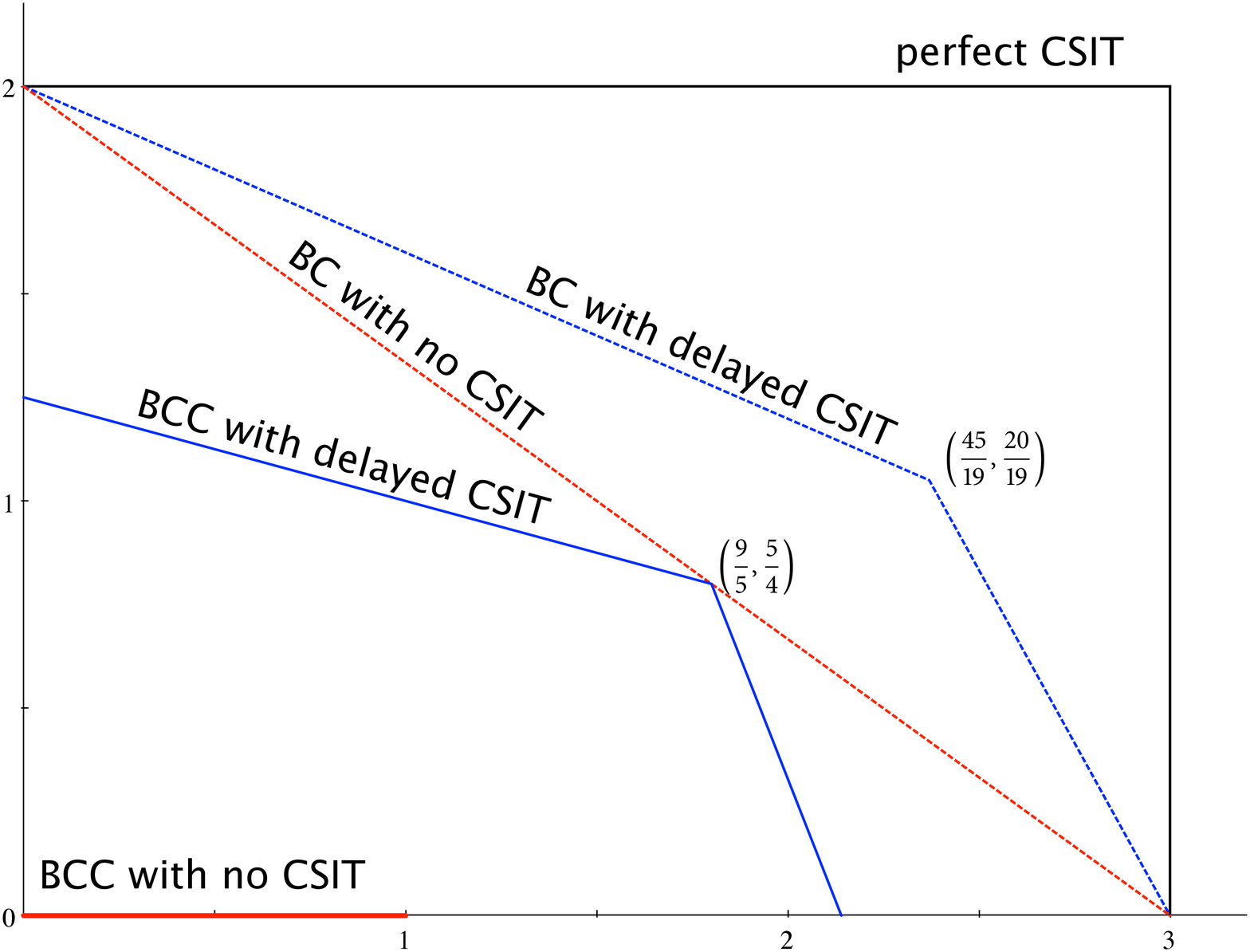}
\caption{The two-user DoF/SDoF region with $m=5, \nA=3, \nB=2$.}
\label{fig:Region}
\end{figure}

In order to visualize the DoF loss due to the secrecy constraints, we
provide an example of the achievable DoF/SDoF regions with $m=5$,
$\nA=3$, and $\nB=2$ in Fig.~\ref{fig:Region}. For the case of perfect
CSIT, the SDoF
region and the DoF region are square. In the MIMO-BC, we send
$(\nA^2(\nA+\nB), \nB^2(\nA+\nB))=(45, 20)$ private symbols to
receiver~\textsf{A} and \textsf{B}, respectively, over a duration of $\nA^2+\nB^2+\nA\nB=19$ slots,
yielding the DoF $\left(\frac{45}{19}, \frac{20}{19}\right)$, as shown
in \cite{vaze2010degrees}. Under the perfect secrecy constraints, we need
an extra phase of the artificial noise transmission of $\nA\nB=6$ slots
to convey two streams securely. This yields the SDoF of
$\left(\frac{9}{5}, \frac{4}{5}\right)$. The comparison with the DoF
region of the MIMO-BC can be interpreted in either an optimistic or a
pessimistic way. On one hand, the benefit of delayed CSIT is more
significant for the SDoF region. On the other hand, we also observe that
the lack of accurate CSIT decreases substantially the SDoF, which
implies  that the secure communications are very sensitive to the
quality of CSIT.

%%%%%%%%%%%%%%%%%%%%%%%%%%%%%%%%%%%%%%%%%%%%%%%%%%%%%%%%%%%%%%%%%%%%%%%%%%%%%
\section{Wiretap Channel: Proofs of Theorems \ref{thm:SDoF} and \ref{thm:asymmetric}}
\label{section:Proof-wiretap}
\subsection{Converse proof of Theorem \ref{thm:SDoF}}

We are now ready to provide the converse by considering different cases below.
\subsubsection{Case $m\leq \nB$} 
From Fano's inequality and the secrecy constraint, we have
\begin{align}
	 n(R-\taulog ) 	 &\le I(\rvW;\rvY^n ) - I(\rvW;\rvZ^n ) \\ 
	 &= I(\rvW;\rvY^n \cond \rvZ^n) - I(\rvW;\rvZ^n \cond \rvY^n) \\
	 &= h(\rvY^n \cond \rvZ^n) - h(\rvY^n  \cond \rvZ^n, \rvW) - I(\rvW;\rvZ^n \cond \rvY^n) \\
	 &\le  h(\rvY^n \cond \rvZ^n)  \label{drop} \\	
         &\le  \sum_{t=1}^n h(\rvY_t \cond \rvZ_t) \label{eq:tmp777}  \\	
         &=  \sum_{t=1}^n h(\rvY_t - \rvH_t\,\hat{\rvX}_{\rvZ,t} \cond
         \rvZ_t) \label{translation}\\ 	
         &= \sum_{t=1}^n h(\rvE_t - \rvH_t\,(\hat{\rvX}_{\rvZ,t}-\rvX_t)  \cond \rvZ_t) \label{translation2}\\ 	
         &= \sum_{t=1}^n h(\rvE_t - \rvH_t\,(\hat{\rvX}_{\rvZ,t}-\rvX_t)) \label{non-conditioning}\\
	 &=n\, \taulog,  \label{eq:tmp888} 
\end{align}
where (\ref{drop}) is from the fact that both $I(\rvW;\rvZ^n \cond \rvY^n)$ and
$h(\rvY^n \cond \rvZ^n, \rvW)$ are non-negative; in \eqref{translation} we use the
fact that translations preserve differential entropy and let $\hat{\rvX}_{\rvZ,t}$
denote the MMSE estimation of $\rvX_t$ given $\rvZ_t$; the last equality
holds because the estimation error does not scale with $P$. 

\subsubsection{Case $ \nB < m\leq \max\left\{ \nA, \nB \right\}$} 
Since this case happens only when $\nB < m \le \nA$, we can assume $m
\leq \nA$. Starting from \eqref{drop}, we have  
\begin{align}%\nonumber
	 n(R-\taulog) 
         &\le h(\rvY^n \cond \rvZ^n) \\          
         &\dotle\! \frac{\min\{m, \nA+\nB \}-\nB}{\nB} h(\rvZ^n) \label{eq:tmp673}\\
         & \leq n (m-\nB) \log P + n\,\taulog,
\end{align}
where \eqref{eq:tmp673} follows straightforwardly from \eqref{eq:tmp1};
the last inequality comes from the fact that i.i.d. Gaussian variables
maximize the differential entropies under the variance constraint.

\subsubsection{Case $m>\max\{\nA,\nB\}$}  
In the following we let $\tilde{m}=\min\{m, \nA+\nB\}$ for notation simplicity. 
We remark that two upper bounds can be obtained as a direct consequence of Lemma \ref{lemma:jointvssingle}. 
One one hand, \eqref{eq:tmp673} still holds 
\begin{align} 
  I(\rvW;\rvY^n) - I(\rvW;\rvZ^n) 
  &\dotle\! \frac{\tilde{m}-\nB}{\nB} h(\rvZ^n).\label{appli}
\end{align}
On the other hand, we have 
\begin{align}
 I(\rvW;\rvY^n) - I(\rvW;\rvZ^n) &= h(\rvY^n) -h(\rvY^n\cond \rvW) -
 h(\rvZ^n) + h(\rvZ^n\cond \rvW)  \\
  &\dotle\!  h(\rvY^n) + \left(1-\frac{\nA}{\tilde{m}} \right)h(\rvZ^n\cond
  \rvW) - h(\rvZ^n) \label{eq:ineq-lemma2}\\  
   &\leq h(\rvY^n)  - \frac{\nA}{\tilde{m}} h(\rvZ^n), \label{eq:tmplast}
\end{align}
where \eqref{eq:ineq-lemma2} follows from \eqref{eq:tmp823};
\eqref{eq:tmplast} follows from $h(\rvZ^n \cond W) \le h(\rvZ^n)$;  
By combining the above two upper bounds, we readily have 
\begin{align}
  n(R-\taulog ) &\le I(\rvW;\rvY^n ) - I(\rvW;\rvZ^n) \\
    &\dotle\!\! \min\left\{ \frac{\tilde{m}-\nB}{\nB}h(\rvZ^n  ), h(\rvY^n ) - \frac{\nA}{\tilde{m}}
    h(\rvZ^n ) \right\} \label{min} \\
    & \leq \max_{\alpha} \max_{\beta}  \min\left\{
    \frac{\tilde{m}-\nB}{\nB}\beta, \alpha - \frac{\nA}{\tilde{m}}
    \beta \right\} \label{maxmax}\\
    & \leq \max_{\alpha}  {\alpha}
    \left({1+\frac{\nA\nB}{\tilde{m}(\tilde{m}-\nB)}}\right)^{-1} \label{innerprob}\\
  &\dotle   
  \left({1+\frac{\nA\nB}{\tilde{m}(\tilde{m}-\nB)}}\right)^{-1}
  {\nA} n \log P,
\end{align}
where \eqref{maxmax} is because the RHS of \eqref{min} can only increase by maximizing it over both entropies $\alpha= h(\rvY^n ) , \beta= h(\rvZ^n ) $; in \eqref{innerprob} the inner maximization is
solved by equalizing two terms inside $\min$, and finally we use
$h(\rvY^n)\leq \nA n \log P +\taulog$. 
This establishes the converse proof. 

%%%%%%%%%%%%%%%%%%%%%%%%%%
\subsection{Achievability proof of Theorem \ref{thm:SDoF}} \label{subsec:achievabilitySDoF}
In the following, we wish to show the achievability of the SDoF.
As in the converse part, we consider separately the cases for different
$m$. Note that only two ranges of $m$ need to be considered. The first
one is $\nB \le m\leq \max\{\nA,\nB\}$ and the other one is $\max\{\nA,\nB\} < m \le \nA+\nB$. 
For $m<\nB$, the SDoF is zero. For $m>\nA+\nB$, the converse shows that
it is useless in terms of SDoF to set more than $\nA+\nB$ antennas. 

\subsubsection{Case $\nB \le m\leq \max\{\nA,\nB\}$}
For this case, we need to show that $d_s(\nA,\nB,m)=m-\nB$ is achievable for $ \nB < m\leq  \nA$. 
This can be simply done by sending a vector of $m$ symbols
of which $m-\nB$ symbols $\rvV$ are useful message and the other $\nB$
symbols $\rvU$ are artificial noise (or a random message).
The legitimate receiver can decode all $m$ symbols and therefore extract
the useful message, i.e., 
\begin{align}
I(\rvV;\rvY) &= I(\rvV,\rvU;\rvY) - I(\rvU;\rvY \cond \rvV) \\
&= \log \det\left( \Id_{\nA} + \frac{P}{m} \Hm \Hm^{\H} \right) 
- \log \det\left( \Id_{\nA} + \frac{\nB P}{m} \Hm \left[
\begin{smallmatrix} \Id_{\nB} & \\ & 0_{m-\nB} \end{smallmatrix} \right]
  \Hm^{\H} \right) \label{eq:det1} \\
&\doteq (m - \nB) \log P,
\end{align}
while the eavesdropper channel is inflated by the
random message and does not expose more than a vanishing fraction of the
useful message, i.e., 
\begin{align}
I(\rvV;\rvZ) &= I(\rvV,\rvU;\rvZ) - I(\rvU;\rvZ \cond \rvV) \\
&= \log \det\left( \Id_{\nB} + \frac{P}{m} \Gm \Gm^{\H} \right) - \log \det\left( \Id_{\nB} + \frac{\nB P}{m} \Gm \left[
\begin{smallmatrix} \Id_{\nB} & \\ & 0_{m-\nB} \end{smallmatrix} \right] \Gm^{\H} \right) 
  \label{eq:det2} \\
  &\doteq \nB\log P - \nB\log P \\
&= 0,
\end{align}
where we used the fact that $\rank(\Hm)=m$ and $\rank(\Gm)=\nB$. Note that \eqref{eq:det1}
and \eqref{eq:det2} are obtained by applying independent Gaussian
signaling to $\rvV$ and $\rvU$ with proper covariance corresponding to
the power constraint. This assumption will be implicitly applied in the
rest of the paper.

\subsubsection{Case $\max\{\nA,\nB\} < m \le \nA + \nB$}

The proposed scheme combines the artificial
noise with the Maddah-Ali Tse~(MAT)
alignment scheme~\cite{maddah2010degrees}. The main idea of this scheme
is to send the artificial noise such that it  fills the
eavesdropper's observation and hides the confidential message, while it
shall be aligned in a reduced subspace at the legitimate receiver. 
\begin{table}[t]
\caption{Proposed three-phase scheme for $\max\{\nA, \nB\} < m\leq \nA +
\nB$.}
\centering
\begin{tabular}{|c|c|c|}
  \hline
  Phase 1 ($t\in\mathcal{T}_1$) & Phase 2 ($t\in\mathcal{T}_2$) & Phase 3 ($t\in\mathcal{T}_3$)\\ \hline
  $\xv_1 = \uv$ & $\xv_2 = \vv + \Thetam \tilde{\yv}_1$ & $\xv_3 = \Phim \tilde{\zv}_2$ \\ %\hline
  $\tilde{\yv}_1 = \tilde{\Hm}_1 \uv$ & $\tilde{\yv}_2 = \tilde{\Hm}_2 \vv +
  \tilde{\Hm}_2 \Thetam \tilde{\Hm}_1 \uv $ & $\tilde{\yv}_3 = \tilde{\Hm}_3 \Phim \tilde{\Gm}_2 \vv +
  \tilde{\Hm}_3 \Phim \tilde{\Gm}_2 \Thetam \tilde{\Hm}_1 \uv$ \\ %\hline
  $\tilde{\zv}_1 = \tilde{\Gm}_1 \uv$ & $\tilde{\zv}_2 = \tilde{\Gm}_2 \vv +
  \tilde{\Gm}_2 \Thetam \tilde{\Hm}_1 \uv$ & $\tilde{\zv}_3 = \tilde{\Gm}_3 \Phim \tilde{\Gm}_2 \vv +
  \tilde{\Gm}_3 \Phim \tilde{\Gm}_2 \Thetam \tilde{\Hm}_1 \uv$ \\ \hline
\end{tabular}
\label{table:scheme}
%\end{table}
%\begin{table}%[htdp]
\caption{Length of three phases $\{\tau_i\}$ for different $m,\nA, \nB$.}
\centering
\begin{tabular}{|c|c|c|}
  \hline
 &  $\max\{\nA,\nB\} < m \leq \nA+\nB$ & $m> \nA+\nB $\\ \hline
$\tau_1$&$ \nA \nB$ & $\nA \nB$\\
 $\tau_2$& $\nA (m-\nB)$& $\nA^2$\\
$\tau_3$ &$(m-\nA)(m-\nB)$ & $\nA\nB$  \\ 
total duration $\sum_{i=1}^3 \tau_i$ &$\nA\nB + m(m-\nB)$ & $ 2\nA\nB +\nA^2$  \\ \hline
\end{tabular}
\label{table:Length}
\end{table}
The proposed three-phase scheme is presented in Table~\ref{table:scheme},
where the signal model without thermal noise is described concisely with the block matrix
notation:
\begin{flalign}
  &&\uv  &=  \bigl[\ \begin{matrix} \uv_1^\T & \dots & \uv_{\tau_1}^\T
  \end{matrix}\ \bigr]^\T \in \mathbb{C}^{m\tau_1\times1},
  & \vv  &=  \bigl[ \ \begin{matrix} \vv_1^\T & \dots & \vv_{\tau_2}^\T 
  \end{matrix}\ \bigr]^\T \in \mathbb{C}^{m\tau_2 \times1}, &&\\
  &&  \tilde{\Hm}_i &=
  \diag\left(\{\Hm_t\}_{t\in\mathcal{T}_i}\right)\in
  \mathbb{C}^{\nA\tau_i \times m\tau_i}, 
    & \tilde{\Gm}_i &= \diag\left(\{\Gm_t\}_{t\in\mathcal{T}_i}\right)
    \in \mathbb{C}^{\nB\tau_i\times m\tau_i},
     && \\ 
  &&   \Thetam &\in \mathbb{C}^{m\tau_2 \times \nA \tau_1}, 
  & \Phim &\in \CC^{ m \tau_3 \times \nB \tau_2},
     && 
\end{flalign}%
where $\tau_i$ denotes the length of phase $i$ for $i=1$, $2$, and $3$ given in Table \ref{table:Length}. 

The three phases are explained as follows: 
\begin{itemize}
  \item \emph{Phase 1, $t \in \mathcal{T}_1 \defeq
    \{1, \ldots, \tau_1\}$: \textbf{sending the artificial noise}.} The $m \tau_1$ symbols sent in
    $\tau_1$ time slots is represented by $\uv$. 
  \item \emph{Phase 2, $t \in \mathcal{T}_2 \defeq \{\tau_1+1, 
    \ldots, \tau_1+\tau_2\}$: \textbf{sending the confidential symbols
    with the artificial noise seen
    by the legitimate receiver}.} In $\tau_2$ time slots, we send the
    $m \tau_2$ useful symbols represented by $\vv$, superimposed by a
    linear combination~(specified by $\Thetam$) of the artificial noise observed by
    the legitimate receiver in phase 1.\footnote{As mentioned before, we
    ignore the scaling factor necessary to meet the power constraint. The same holds for the transmit vector in phase 3. } 
  \item \emph{Phase 3, $t \in \mathcal{T}_3 \defeq \{\tau_1+\tau_2+1, 
    \ldots, \tau_1+\tau_2+\tau_3\}$: \textbf{repeating the eavesdropper's observation during
    phase 2}.} The final phase consists in
    sending a linear combination~(specified by $\Phim$) of the
    eavedropper's observation in phase 2. The aim of this phase is to
    complete the equations for the legitimate receiver to solve the
    useful symbols $\vv$ without exposing anything new to the
    eavedropper.  
\end{itemize}

After three phases, the observations are given by
\begin{subequations}
\begin{align}
   \yv &= 
    \underbrace{\begin{bmatrix} 
      \Id_{\nA \tau_1}  & \zerov_{m\tau_2} \\
      \tilde{\Hm}_{2} \Thetam  &  \tilde{\Hm}_{2} \\ 
      \tilde{\Hm}_{3} \Phim \tilde{\Gm}_{2} \Thetam   & \tilde{\Hm}_{3} \Phim
      \tilde{\Gm}_{2}
    \end{bmatrix}}_{\He}
  \begin{bmatrix}  \tilde{\Hm}_{1} \uv \\
          \vv\end{bmatrix}  + \ev, \\
  \zv&= 
  \underbrace{\begin{bmatrix}
    \tilde{\Gm}_{1}  & \zerov_{\nB \tau_2} \\
    \tilde{\Gm}_{2} \Thetam \tilde{\Hm}_{1} & \Id_{\nB \tau_2}  \\
    \tilde{\Gm}_{3} \Phim \tilde{\Gm}_{2} \Thetam \tilde{\Hm}_{1} &
    \tilde{\Gm}_{3} \Phim
  \end{bmatrix}}_{\Ge}
   \begin{bmatrix} \uv \\
     \tilde{\Gm}_2\vv\end{bmatrix} +\bv.
\end{align}%
\end{subequations}
Therefore, we have
\begin{align}
  I(\rvV;\rvY) &= I(\rvV,\tilde{\Hm}_1\rvU;\rvY) - I(\tilde{\Hm}_1 \rvU;\rvY \cond \rvV) \\
&\doteq \rank\bigl( \He \bigr) \log(P) - 
\rank\left( 
\begin{smallmatrix} 
      \Id_{\nA \tau_1}  \\
      \tilde{\Hm}_{2} \Thetam  \\ 
      \tilde{\Hm}_{3} \Phim \tilde{\Gm}_{2} \Thetam   
  \end{smallmatrix} 
\right) \log(P)    \\
&= \left(\nA \tau_1+ \rank\left( 
\begin{smallmatrix} 
      \tilde{\Hm}_{2} \\ 
      \tilde{\Hm}_{3} \Phim \tilde{\Gm}_{2}    
  \end{smallmatrix} 
\right)\right) \log(P) -  \nA \tau_1 \log P  \label{eq:structure} \\
&= \rank\left( 
\begin{smallmatrix} 
      \tilde{\Hm}_{2} \\ 
      \tilde{\Hm}_{3} \Phim \tilde{\Gm}_{2}    
  \end{smallmatrix} 
\right)\log(P)  \\
&= m \nA (m-\nB) \log P,
\end{align}
where \eqref{eq:structure} follows due to the block-triangular structure
of $\He$ and by the fact that the rank of the second term corresponds to the rank of the identity matrix. In order to prove the
last equality, we need to show first that the submatrix $\tilde{\Hm}_{3} \Phim \tilde{\Gm}_{2}$ has full-row rank with linearly independent $\nA \tau_3$ rows. This is satisfied by letting 
\begin{align}\label{MatrixB}
  \Phim \pmb{\Pi} = \begin{bmatrix}
   \diag(\{ \Phim_t \}_{t=1}^{\tau_3})& \zerov_{m\tau_3 \times (\nB\tau_2-\nA \tau_3)}
   \end{bmatrix},
\end{align}
where $\pmb{\Pi}$ is a permutation matrix such that the first
$\nA\tau_3$ rows of $\pmb{\Pi}^\T \tilde{\Gm}_{2}$, is block diagonal, denoted by $\diag\left( \{\Gm_{2,t}^{\Pi}\}_{t\in\Tc_2}\right)$, 
where $\Gm_{2,t}^{\Pi}\in \CC^{(m-\nA)\times m}$ is a submatrix of $\tilde{\Gm}_{2,t}$;
$\Phim_t$ denotes a $m\times \nA$ matrix with $\nA$ independent
columns, e.g.,  $\Phim_t=\bigl[ \begin{matrix} \Id_{\nA} &
  \zerov_{\nA\times (m-\nA)}\end{matrix} \bigr]^\T$. Note that with this
  particular choice of $\Phim$, the resulting submatrix is given by 
\begin{align}
\tilde{\Hm}_{3} \Phim \tilde{\Gm}_{2} = \diag(\{
\Hm_{t+\tau_1+\tau_2}\Phim_t \}_{t=1}^{\tau_3}) \,
 \diag\left( \{\Gm_{2,t}^{\Pi}\}_{t\in\Tc_2}\right)\label{eq:tmp922}
\end{align}
Since the first matrix in the
right hand side of \eqref{eq:tmp922} is square and full-rank, it is easy
to see that $\rank\left( \begin{smallmatrix} \tilde{\Hm}_{2} \\
  \tilde{\Hm}_{3} \Phim \tilde{\Gm}_{2}    \end{smallmatrix} \right) =
  \rank\left( \begin{smallmatrix} \tilde{\Hm}_{2} \\
     \diag\left( \{\Gm_{2,t}^{\Pi}\}_{t\in\Tc_2}\right) \end{smallmatrix} \right) $.
By the row permutation, we can readily show that the latter has a desired rank of $m\nA (m-\nB)$. Namely, 
\begin{align}
\rank\left( \begin{matrix} \tilde{\Hm}_{2} \\
 \diag\left( \{\Gm_{2,t}^{\Pi}\}_{t\in\Tc_2}\right)  \end{matrix} \right) &= \sum_{t=1}^{\nA(m-\nB)} \rank\left(\begin{matrix} \tilde{\Hm}_{2,t} \\
    \tilde{\Gm}_{2,t}^{\Pi}  \end{matrix} \right)\\
    &= \nA(m-\nB) m,
\end{align}
where the last equality follows by noticing that each block $t$ corresponds to $m$ different rows of the state matrix $\Sm_t$ which are linearly independent from Assumption \ref{assumption:independency}. 
On the other hand, the eavesdropper's observation is filled by the artificial noise and thus does not expose more than a vanishing fraction of the
useful message, i.e., 
\begin{align}
  I(\rvV;\rvZ) &\le I(\tilde{\Gm}_2\rvV;\rvZ) \label{chain-rule}\\ 
  &= I(\tilde{\Gm}_2\rvV,\rvU;\rvZ) - I(\rvU;\rvZ \cond \tilde{\Gm}_2
  \rvV)\\
  &\doteq \nB (\tau_1+ \tau_2) \log P  - \rank\left( 
  \begin{smallmatrix}
    \tilde{\Gm}_{1}  \\
    \tilde{\Gm}_{2}\Thetam \tilde{\Hm}_{1} \\
    \tilde{\Gm}_{3} \Bm \tilde{\Gm}_{2} \Thetam \tilde{\Hm}_{1}  \end{smallmatrix}
  \right) \log P  \label{eq:rank-defficient2} \\
&= m \nA \nB \log P  - \rank\left( 
  \begin{smallmatrix}
    \tilde{\Gm}_{1}  \\
    \tilde{\Gm}_{2}\Thetam \tilde{\Hm}_{1} \end{smallmatrix}
  \right) \log P \label{repeat} \\
  &\doteq 0,
\end{align}
where \eqref{chain-rule} follows due to the Markov chain
$\rvV\leftrightarrow \tilde{\Gm}_2\rvV\leftrightarrow \rvZ$;
(\ref{eq:rank-defficient2}) follows by noticing that the rank of  $\Ge$
is determined by the submatrix corresponding to first two phases;
\eqref{repeat} follows because the third block row is a linear
combination of rows from the second block row. In order to prove the
last equality, we choose
\begin{align}\label{MatrixA}
  \Thetam \pmb{\Pi} = \begin{bmatrix}
   \diag(\{ \Thetam_t \}_{t=1}^{\tau_2})& \zerov_{m\tau_1 \times (\nA\tau_1-\nB \tau_2)}
   \end{bmatrix},
\end{align}
where $\pmb{\Pi}$ is a permutation matrix\footnote{We abuse the notation to denote another permutation matrix than the one used in \eqref{MatrixB}
.} such that the first
$\nB\tau_2$ rows of $\pmb{\Pi}^\T \tilde{\Hm}_{1}$ is block diagonal,
denoted by $\diag\left(\{\tilde{\Hm}_{1,t}^{\Pi}\}_{t\in\Tc_1}\right)$,
with $\tilde{\Hm}_{1,t}^{\Pi}$ being a  $(m-\nB)\times m$ submatrix of  $\tilde{\Hm}_{1,t}$;
$\Thetam_t$ denotes a $m\times \nB$ matrix with $\nB$ independent
columns, e.g.,  $\Thetam_t=\bigl[ \begin{matrix} \Id_{\nB} &
  \zerov_{\nB\times (m-\nB)}\end{matrix} \bigr]^\T$. By applying exactly
  the same reasoning as on the choice of $\Phim$, we can prove that     
$\rank\left( \begin{smallmatrix} \tilde{\Gm}_{1}  \\
  \tilde{\Gm}_{2}\Thetam \tilde{\Hm}_{1} \end{smallmatrix} \right) =
  m\tau_1= m\nA\nB$.  As a result, the $\nA m(m-\nB)$ useful symbols can be conveyed secretly over
$\nA\nB+m(m-\nB)$ time slots in the high SNR regime, yielding the SDoF of
$\frac{\nA m(m-\nB)}{\nA\nB+m(m-\nB)}$. 

%%%%%%%%%%%%%%%%%%%%%%%%%%%%%%%%%%%%%%
\subsection{Achievability proof of Theorem \ref{thm:asymmetric}}
In this subsection, we provide the achievability proof for the case of
delayed partial CSIT when the transmitter has delayed CSI only on
the legitimate channel. We focus on the case $\max\{\nA,\nB\}< m <
\nA+\nB$. For the case of $m\geq \nA+\nB$, we can easily show that the
desired SDoF follows by using only $\nA+\nB$ antennas out of $m$, i.e.,
by replacing $m$ by $\nA+\nB$ similarly to the case of delayed CSIT on
both channels. We propose a variant of the artificial noise alignment scheme described
previously. The lack of CSIT on the eavesdropper channel requires the
following modifications. First, the transmission consists of first two
phases presented in Table \ref{table:scheme}, because the lack of CSI on
the eavesdropper channel does not enable the transmitter to repeat the
signal overheard by the eavesdropper (corresponding to the third phase).
Consequently, the confidential symbols $\vv$ sent during the second
phase must be decoded within this phase. This decreases the dimension of
$\vv$ from  $m\tau_2$ to $\nA \tau_2$. After two phases, the
observations are given by
\begin{subequations}
\begin{align}
   \yv &= 
    \underbrace{\begin{bmatrix} 
      \Id_{\nA \tau_1}  & \zerov_{\nA\tau_2} \\
      \tilde{\Hm}_{2} \Thetam  &  \tilde{\Hm}_{2} \\ 
     \end{bmatrix}}_{\He}
  \begin{bmatrix}  \tilde{\Hm}_{1} \uv \\
          \vv\end{bmatrix}  + \ev, \\
  \zv&= 
  \underbrace{\begin{bmatrix}
    \tilde{\Gm}_{1}  & \zerov_{\nB \tau_2} \\
    \tilde{\Gm}_{2} \Thetam \tilde{\Hm}_{1} & \Id_{\nB \tau_2}  \\  \end{bmatrix}}_{\Ge}
   \begin{bmatrix} \uv \\
     \tilde{\Gm}_2\vv\end{bmatrix} +\bv.
\end{align}%
\end{subequations}
Following similar steps as before and choosing $\Thetam$ in (\ref{MatrixA}), we can easily show that 
\begin{align}
  I(\rvV;\rvY) &\doteq  \nA^2 (m-\nB) \log P,  \\
  I(\rvV;\rvZ) &\doteq 0.
\end{align}
 As a result, the $\nA^2 (m-\nB)$ useful symbols can be conveyed secretly over
$\nA m $ time slots in the high SNR regime, yielding the SDoF of
$\frac{\nA (m-\nB)}{m}$.

%%%%%%%%%%%%%%%%%%%%%%%%%%%%%%%%%%%%%
\section{Broadcast Channel with Confidential Messages: Proof of Theorem \ref{thm:ubBCC}}\label{section:Proof-bcc}
\subsection{Converse}
We focus on the case $m>\max\{\nA,\nB\}$ in the following. The converse for the other cases is trivial from Section \ref{section:Proof-wiretap}. 
The secrecy constraint (\ref{eq:Constraint1-bcc}) together with Fano's 
inequality for $\WB$, i.e., $h(\WB|\zv^n)\leq n \epsilon$, yields 
\begin{equation} 
I(\WA;\rvZ^n|\WB)\leq n\, \taulog \label{eq:ubBCC6}.
\end{equation}
Similarly to the converse of the MIMO wiretap channel, we obtain two upper bounds on $R_{\textsf{A}}$. The first bound is obtained by
combining \eqref{eq:ubBCC6} with Fano's inequality on $\WA$, i.e., $h(\WA|\yv^n)\leq n \epsilon$, 
\begin{align}
n(R_{\textsf{A}}- \taulog) &\leq  I(\WA;\rvY^n|\WB) - I(\WA;\rvZ^n|\WB)\nonumber\\
&\leq  I(\WA;\rvY^n|\rvZ^n,\WB)\label{degrade}\\
 &\leq h(\rvY^n|\rvZ^n,\WB) \\
&\dotle \frac{\tilde{m}-\nB}{\nB}h(\rvZ^n|\WB), \label{apply-lemma}
\end{align}
where (\ref{degrade}) follows by $I(\WA;\rvY^n|\WB)\leq
I(\WA;\rvY^n,\rvZ^n|\WB)$; (\ref{apply-lemma}) follows from inequality
\eqref{eq:tmp1} in Lemma~1.
The second bound is
(\ref{eq:tmplast}) which holds also here by replacing $\rvW$ by $\WA$,
namely,
\begin{align}
I(\WA; \rvY^n)-I(\WA; \rvZ^n) \dotle h(\rvY^n) - \frac{\nA}{\tilde{m}} h(\rvZ^n). 
\end{align}
Putting the two upper bounds together, we have 
\begin{equation}\label{eq:ubBCC7}
 n(R_{\textsf{A}}- \taulog) \dotle  \min\left\{  \frac{\tilde{m}-\nB}{\nB}h(\rvZ^n|\WB), h(\rvY^n)- \frac{\nA}{\tilde{m}}h(\rvZ^n)\right\}. 
\end{equation} 

On the other hand, Fano's inequality for $\WB$ leads to
\begin{equation} 
  n(R_{\textsf{B}}- \taulog) \leq  h(\rvZ^n) - h(\rvZ^n|\WB). \label{eq:ubBCC8}
\end{equation}
Now, we sum inequalities (\ref{eq:ubBCC7}) and (\ref{eq:ubBCC8}) with the weight $T_{\textsf{A}}=\nA\nB + \tilde{m}(\tilde{m}-\nB)$, $\nA(\tilde{m}-\nB)$,
respectively. This yields 
\begin{align}
  n( T_{\textsf{A}} R_{\textsf{A}}+ \nA(\tilde{m}-\nB) R_{\textsf{B}}- \taulog)
  &\dotle\! \max_{h(\rvY^n)}\max_{\alpha}\min\left\{(\tilde{m}-\nB)\alpha, T_{\textsf{A}} h(\rvY^n)-\frac{\nA\nB}{\tilde{m}} \alpha\right\}\\
  &\leq \max_{h(\rvY^n)} \tilde{m}(\tilde{m}-\nB)h(\rvY^n)\\
  &\dotle\!  \nA \tilde{m}(\tilde{m}-\nB) n\log P,
\end{align}
where we let $\alpha= \nA h(\rvZ^n)+\frac{\tilde{m}(\tilde{m}-\nB)}{\nB}h(\rvZ^n|\WB)$ in the first inequality and the last inequality follows because 
$h(\rvY^n)\leq n \nA\log P+\taulog$. 
By dividing both sides by $\nA \tilde{m}(\tilde{m}-\nB) \log P$ and
letting $P$ grow, we obtain the first desired
inequality~\eqref{BCC-inq1}. Due to the symmetry of the problem,
\eqref{BCC-inq2} can be obtained by swapping the roles of $R_{\textsf{A}}$ and
$R_{\textsf{B}}$. This completes the converse proof.  

\subsection{Achievability}

\begin{table}%[htdp]
\caption{Proposed four-phase scheme for $\max\{\nA, \nB\} < m\leq \nA +
\nB$.}
\centering
\begin{tabular}{|c|c|c|}
  \hline
  Phase 1 & Phase 2 & Phase 3 \\ \hline
  $\xv_1 = \uv$ & $\xv_2 = \vA + \ThetamA \tilde{\yv}_1$ &  $\xv_3 =  \vB + \ThetamB \tilde{\zv}_1$  \\ %\hline
  $\tilde{\yv}_1 = \tilde{\Hm}_1 \uv$ & $\tilde{\yv}_2 = \tilde{\Hm}_2
  \left( \vA + \ThetamA \tilde{\Hm}_1 \uv \right) $ &  $\tilde{\yv}_3 =
  \tilde{\Hm}_3 \left( \vB + \ThetamB \tilde{\Gm}_1 \uv \right) $ 
  \\ %\hline
  $\tilde{\zv}_1 = \tilde{\Gm}_1 \uv$ & $\tilde{\zv}_2 = \tilde{\Gm}_2
  \left( \vA + \ThetamA \tilde{\Hm}_1 \uv \right) $ 
  & $\tilde{\zv}_3 = \tilde{\Gm}_3 \left( \vB + \ThetamB \tilde{\Gm}_1
  \uv \right) $ 
   \\ \hline
  \hline
  %%%%%
  \multicolumn{3}{|c|}{  Phase 4 } \\ \hline
  \multicolumn{3}{|c|}{ $\xv_4 = \PhimA \tilde{\zv}_2 + \PhimB \tilde{\yv}_3$} \\ 
  \multicolumn{3}{|c|}{ $\tilde{\yv}_4 = \tilde{\Hm}_4 \PhimA( \tilde{\Gm}_2 \vA +
 \tilde{\Gm}_2 \ThetamA \tilde{\Hm}_1 \uv) +  \tilde{\Hm}_4\PhimB(\tilde{\Hm}_3 \vB +
  \tilde{\Hm}_3 \ThetamB \tilde{\Gm}_1 \uv)$ } \\ 
  \multicolumn{3}{|c|}{ $\tilde{\zv}_4 = \tilde{\Gm}_4 \PhimA (\tilde{\Gm}_2 \vA +
 \tilde{\Gm}_2 \ThetamA \tilde{\Hm}_1 \uv) +  \tilde{\Gm}_4\PhimB (\tilde{\Hm}_3 \vB +
  \tilde{\Hm}_3 \ThetamB \tilde{\Gm}_1 \uv)$} \\ \hline
\end{tabular}
\label{table:scheme-bcc}
%\end{table}
%\begin{table}%[htdp]
\caption{Length of four phases $\{\tau_i\}$ for different $m$, $\nA$,
and $\nB$.}
\centering
\begin{tabular}{|c|c|c|}
  \hline
 &  $\max\{\nA,\nB\} < m \leq \nA+\nB$ & $m> \nA+\nB $\\ \hline
$\tau_1$&$ \nA \nB$ & $\nA \nB$\\
 $\tau_2$& $\nA (m-\nB)$& $\nA^2$\\
 $\tau_3$& $\nB (m-\nA)$& $\nB^2$\\ 
$\tau_4$ &$(m-\nA)(m-\nB)$ & $\nA\nB$  \\ 
total duration $\sum_{i=1}^4 \tau_i$ &$m^2$ & $ (\nA+\nB)^2$  \\ \hline
\end{tabular}
\label{table:Length2}
\end{table}

The corner points can be achieved by the ANA scheme described in Section \ref{section:Proof-wiretap}. Here, we provide a strategy achieving the sum SDoF point. 
In fact,  the ANA scheme for the MIMO wiretap channel in Section
\ref{section:Proof-wiretap} can be suitably modified to convey two
confidential messages. We focus on the case $\max\{\nA, \nB\} < m \leq \nA+\nB$ because the converse proof says that we only need to use $\nA+\nB$ antennas for the case $m> \nA+\nB$.  

The proposed four-phase scheme is presented in
Table~\ref{table:scheme-bcc},
where the signal model is describe concisely with the block matrix
notation:
\begin{flalign}
  &&\uv  &=  \bigl[\ \begin{matrix} \uv_1^\T & \dots & \uv_{\tau_1}^\T
  \end{matrix}\ \bigr]^\T \in \mathbb{C}^{m\tau_1\times1}, &&&&\\
  && \vA  &=  \bigl[ \ \begin{matrix} \vv_{\textsf{A},1}^\T & \dots &
    \vv_{\textsf{A},\tau_2}^\T 
  \end{matrix}\ \bigr]^\T \in \mathbb{C}^{m\tau_2 \times1}, & \vB  &=
  \bigl[ \ \begin{matrix} \vv_{\textsf{B},1}^\T & \dots &
    \vv_{\textsf{B},\tau_3}^\T 
  \end{matrix}\ \bigr]^\T \in \mathbb{C}^{m\tau_3 \times1}, &&\\
  &&  \tilde{\Hm}_i &=
  \diag\left(\{\Hm_t\}_{t\in\mathcal{T}_i}\right)\in
  \mathbb{C}^{\nA\tau_i \times m\tau_i}, 
    & \tilde{\Gm}_i &= \diag\left(\{\Gm_t\}_{t\in\mathcal{T}_i}\right)
    \in \mathbb{C}^{\nB\tau_i\times m\tau_i},
     && \\ 
     &&   \Thetam_{\textsf{A}} &\in \mathbb{C}^{m\tau_2 \times \nA \tau_1}, & 
    \Thetam_{\textsf{B}} &\in \mathbb{C}^{m\tau_3 \times \nB \tau_1}, &&
    \\
    && \Phim_{\textsf{A}} &\in \CC^{ m \tau_4 \times \nB \tau_2},  &
     \Phim_{\textsf{B}} &\in \CC^{ m \tau_4 \times \nA \tau_3},  && 
\end{flalign}
where the durations of four phases $\{\tau_i\}_{i}$ are given in Table \ref{table:Length2}. The four phases consist of: 
\begin{itemize}
  \item \emph{Phase 1, $t \in \mathcal{T}_1 \defeq
    \{1, \ldots, \tau_1\}$: \textbf{sending the artificial noise}.} The $m \tau_1$ symbols sent in
    $\tau_1$ time slots is represented by $\uv$. 
  \item \emph{Phase 2, $t \in \mathcal{T}_2 \defeq \{\tau_1+1, 
    \ldots, \tau_1+\tau_2\}$: \textbf{sending the confidential symbols
    $\vA$ with the artificial noise seen
    by receiver~\textsf{A}}.} In $\tau_2$ time slots, we send the
    $m \tau_2$ useful symbols represented by $\vA$, superimposed by a
    linear combination~(specified by $\Thetam_{\textsf{A}}$) of the
    artificial noise observed by
    receiver~\textsf{A} in phase 1.
  \item \emph{Phase 3, $t \in \mathcal{T}_3 \defeq \{\tau_1+\tau_2+1, 
    \ldots, \tau_1+\tau_2+\tau_3\}$: \textbf{sending the confidential symbols
    $\vB$ with the artificial noise seen
    by receiver~\textsf{B}}.} In $\tau_3$ time slots, we send the
    $m \tau_3$ useful symbols represented by $\vB$, superimposed by a
    linear combination~(specified by $\Thetam_{\textsf{B}}$) of the
    artificial noise observed by
    receiver~\textsf{B} in phase 1.
   \item \emph{Phase 4, $t \in \mathcal{T}_4 \defeq
     \{\tau_1+\tau_2+\tau_3+1, 
    \ldots, \tau_1+\tau_2+\tau_3+\tau_4\}$: \textbf{repeating the past
    observations during in phase 2 and 3}.} The final phase consists in
    sending a linear combination of 
    receiver~\textsf{B}'s observation in phase 2~(specified by
    $\Phim_{\textsf{A}}$) and receiver~\textsf{A}'s observation in phase 3~(specified by
    $\Phim_{\textsf{B}}$). The aim of this phase is to
    complete the equations for the intended receivers to solve the
    useful symbols without exposing anything new about the message to
    the unintended receivers. 
\end{itemize}

After four phases, the observations are given by
\begin{subequations}
\begin{align}
  \yv &= 
    \underbrace{\begin{bmatrix} 
      \tilde{\Hm}_2 & \tilde{\Hm}_2 \ThetamA  & \zerov\\ 
      \tilde{\Hm}_4 \PhimA \tilde{\Gm}_2&  \tilde{\Hm}_4 \PhimA \tilde{\Gm}_2 \ThetamA & \tilde{\Hm}_4\\
       \zerov & \Id_{\nA\tau_1} &  \zerov\\
       \zerov & \zerov & \Id_{\nA\tau_3} 
     \end{bmatrix} }_{\He_{\text{bcc}}}
  \begin{bmatrix}\vA \\  \tilde{\Hm}_1\uv \\   \tilde{\Hm}_3 \vB +
    \tilde{\Hm}_3 \ThetamB \tilde{\Gm}_1\uv\end{bmatrix} + \ev, \\
  \zv &= 
  \underbrace{\begin{bmatrix}
  \zerov & \Id_{\nB\tau_1} & \zerov \\
  \zerov & \zerov & \Id_{\nB\tau_2} \\
 \tilde{\Gm}_3 &\tilde{\Gm}_3 \ThetamB & \zerov	\\
 \tilde{\Gm}_4 \PhimB\tilde{\Hm}_3&  \tilde{\Gm}_4 \PhimB\tilde{\Hm}_3\ThetamB   &\tilde{\Gm}_4 \PhimA
    \end{bmatrix}}_{\Ge_{\text{bcc}}}
   \begin{bmatrix} \vB \\  \tilde{\Gm}_1\uv \\ 
   \tilde{\Gm}_2 \vA +\tilde{\Gm}_2  \ThetamA \tilde{\Hm}_1\uv \end{bmatrix} + \bv.
\end{align}%
\end{subequations}
First, we examine the mutual information between $\vA$ and $\yv$: 
\begin{align}
  I(\vA;\rvY) &= I(\vA,\tilde{\Hm}_1\rvU, \tilde{\Hm}_3 \vB +  \tilde{\Hm}_3 \ThetamB \tilde{\Gm}_1\uv;\rvY) - I(\tilde{\Hm}_1 \rvU,\tilde{\Hm}_3 \vB +  \tilde{\Hm}_3 \ThetamB \tilde{\Gm}_1\uv;\rvY \cond \vA) \\
  &\doteq \rank( \He_{\text{bcc}} ) \log(P) - 
\rank\left( 
\begin{smallmatrix} 
\tilde{\Hm}_{2} \ThetamA  & \zerov\\ 
      \tilde{\Hm}_4   \tilde{\Hm}_{3} \PhimA \tilde{\Gm}_{2} \ThetamA & \tilde{\Hm}_4  \\
      \Id_{\nA \tau_1} & \zerov  \\
      \zerov & \Id_{\nA \tau_3} 
  \end{smallmatrix} 
\right) \log P     \\
&= \left( \nA (\tau_1+\tau_3)+ \rank\left( 
\begin{smallmatrix} 
      \tilde{\Hm}_{2} \\ 
      \tilde{\Hm}_4\PhimA \tilde{\Gm}_{2}    
  \end{smallmatrix} 
\right) \right) \log P -  \nA( \tau_1 +\tau_3)\log P
\label{eq:structure2} \\
&= \rank\left( 
\begin{smallmatrix} 
      \tilde{\Hm}_{2} \\ 
      \tilde{\Hm}_{4} \PhimA \tilde{\Gm}_{2}    
  \end{smallmatrix} 
\right)\log P  \\
&= m\,\nA (m-\nB) \log P, \label{chooseB}
\end{align}
where in (\ref{eq:structure2}) the first term is due to the
block-triangular structure of $\He_{\text{bcc}} $ and the second term
follows because the rank corresponds to the rank of the identity matrix;
(\ref{chooseB}) follows by choosing $\PhimA$ given in (\ref{MatrixB})
where we replace $\tau_3$ with $\tau_4$.  

Next, in order to examine the leakage of $\vA$ to receiver~\textsf{B}, we write
\begin{align}
  I(\vA;\rvZ, \vB) &= I(\vA; \rvZ |\vB) \\
  &\le I(\tilde{\Gm}_2\vA;\rvZ | \vB) \label{chain}\\ 
  &= I(\tilde{\Gm}_2\vA,\rvU;\rvZ| \vB) - I(\rvU;\rvZ \cond \tilde{\Gm}_2 \vA, \vB) \label{chain2}\\
  &\le I( \tilde{\Gm}_1\uv, \tilde{\Gm}_2 \vA +\tilde{\Gm}_2  \ThetamA \tilde{\Hm}_1\uv;\rvZ| \vB) - I(\rvU;\rvZ \cond \tilde{\Gm}_2 \vA, \vB) \\
  &\doteq  \rank\left( 
  \begin{smallmatrix} 
\Id_{\nB\tau_1} & \zerov \\
\zerov & \Id_{\nB\tau_2} \\
\tilde{\Gm}_3 \ThetamB & \zerov	\\
\tilde{\Gm}_4 \PhimB\tilde{\Hm}_3\ThetamB   &\tilde{\Gm}_4 \PhimA
    \end{smallmatrix}
  \right) \log P    
    - \rank\left( 
  \begin{smallmatrix}
   \tilde{\Gm}_1  \\
   \tilde{\Gm}_2  \ThetamA \tilde{\Hm}_1\\
 \tilde{\Gm}_3 \ThetamB  \tilde{\Gm}_1\\
  \tilde{\Gm}_4 \PhimB\tilde{\Hm}_3\ThetamB   \tilde{\Gm}_1+ \tilde{\Gm}_4 \PhimA  \tilde{\Gm}_2  \ThetamA \tilde{\Hm}_1      
    \end{smallmatrix}
  \right) \log P  \label{rank-submatrices} \\
&= \nB (\tau_1+ \tau_2) \log P  - \rank\left( 
  \begin{smallmatrix}
   \tilde{\Gm}_1  \\
   \tilde{\Gm}_2  \ThetamA \tilde{\Hm}_1
    \end{smallmatrix}
  \right) \log P   \label{simplify}\\
  &\doteq 0,
\end{align}
where \eqref{chain} follows from the Markov chain $\vA \leftrightarrow
\tilde{\Gm}_2\vA \leftrightarrow \zv$; \eqref{chain2} is due to another Markov chain $(\tilde{\Gm}_2\vA,\rvU)\leftrightarrow( \tilde{\Gm}_1\uv, \tilde{\Gm}_2 \vA +\tilde{\Gm}_2  \ThetamA \tilde{\Hm}_1\uv)\leftrightarrow \zv$; in \eqref{simplify} 
we notice that two block columns of $\Ge_{\text{bcc}}$ is
block-triangular and the second term follows by keeping only linearly
independent block rows; the last equality is obtained by setting $\ThetamA$ given in \eqref{MatrixA}. 
As a result, the SDoF $\dA = \frac{\nA( \tilde{m}-\nB)}{ \tilde{m}}$ is
achieved with the proposed scheme. By symmetry of the problem, we have $\dB =\frac{\nB(\tilde{m}-\nA)}{ \tilde{m}}$ which completes the proof. 

\section{Conclusions and Perspectives}\label{section:Conclusions}
We studied the impact of delayed CSIT on the MIMO wiretap channel and
the MIMO broadcast channel with confidential messages (BCC) by focusing
on the secrecy degrees of freedom (SDoF) metric. The optimal SDoF region
of  the two-user Gaussian MIMO-BCC was fully characterized. It is shown
that an artificial noise alignment~(ANA) scheme, which can be regarded
as a non-trivial extension of Maddah-Ali Tse (MAT) scheme, can achieve
the entire SDoF region. The proposed ANA scheme enables to nicely
quantify the resource overhead to be dedicated to secure the
confidential messages, which in turn appears as a DoF loss. Although
delayed CSIT was found useful to improve the SDoF over a wide range of the MIMO system, 
our study somehow revealed the bottleneck of physical-layer security due
to its high sensitivity to the quality of CSIT.  

Several interesting open problems emerge out of this work. First, some
techniques used  for lower- and upper-bounding the SDoF in this work may
serve to enhance further insights on related problems for moderate SNR
regimes. Second, the characterization of the SDoF upper bound of the
Gaussian MIMO  wiretap channel with delayed partial CSIT remains open.
We emphasize that for the case of partial CSIT, the inequalities due to the channel symmetry still hold true,  but these do not seem to be enough to prove the converse. The challenge consists of finding novel and tighter inequalities that capture some new asymmetry between $h(\zv^n)$ and $h(\yv^n)$. Finally, the extension to more complex scenarios such as the BCC with more than two receivers can be also investigated.

\appendices 
\section{Proof of Lemma \ref{lemma:jointvssingle}} \label{appendix:lemma-jointvssingle}
\begin{lemma} \label{lemma:essential}
  Let $\rvx^L = \left( \rvx_1, \ldots, \rvx_L \right)$ be entropy-symmetric
  such that $h(\{x_j:\,j\in\mathcal{J}\}) =  h(\{x_k:\,k\in\mathcal{K}\})$, for any
  $|\mathcal{J}| = |\mathcal{K}| \le L$. Then, for any $M\ge N$, we have 
  \begin{subequations}
      \begin{align}
        h(\rvx^{N+k}) - h(\rvx^N) &\ge h(\rvx^{M+k}) - h(\rvx^M),
        \quad \forall\,k\ge0,
        \label{eq:ess1} \\
        \text{and} \quad M\, h(\rvx^N) &\ge N\, h(\rvx^M), \label{eq:ess2}
      \end{align}%
    \end{subequations}
  where we define $h(\emptyset) = 0$ for convenience of notation. 
\end{lemma}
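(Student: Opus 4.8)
The plan is to reduce everything to a single consequence of the entropy-symmetry hypothesis: for any two \emph{disjoint} index sets $\mathcal{A},\mathcal{B}$, the conditional entropy $h(x_{\mathcal{A}}\cond x_{\mathcal{B}})$ depends only on the cardinalities $|\mathcal{A}|$ and $|\mathcal{B}|$, not on the particular indices. Indeed, writing $h(x_{\mathcal{A}}\cond x_{\mathcal{B}}) = h(x_{\mathcal{A}\cup\mathcal{B}}) - h(x_{\mathcal{B}})$ and using disjointness to get $|\mathcal{A}\cup\mathcal{B}| = |\mathcal{A}|+|\mathcal{B}|$, both terms on the right are pinned down by the hypothesis once the sizes are fixed, so $h(x_{\mathcal{A}}\cond x_{\mathcal{B}}) = h(\rvx^{|\mathcal{A}|+|\mathcal{B}|}) - h(\rvx^{|\mathcal{B}|})$. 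This is the one fact I will invoke repeatedly, and setting it up carefully is the real content of the argument.

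For \eqref{eq:ess1} I take the fresh block $\mathcal{A} = \{M+1,\dots,M+k\}$, which lies inside $\{1,\dots,L\}$ since $h(\rvx^{M+k})$ already appears in the statement, and compare two conditionings of $x_{\mathcal{A}}$. Conditioning on $\{1,\dots,M\}$ gives $h(x_{\mathcal{A}}\cond x_{\{1,\dots,M\}}) = h(\rvx^{M+k}) - h(\rvx^M)$ directly, while conditioning on the smaller set $\{1,\dots,N\}$ gives, by the symmetry reduction above (relabeling the $k$ fresh indices costs nothing), $h(x_{\mathcal{A}}\cond x_{\{1,\dots,N\}}) = h(\rvx^{N+k}) - h(\rvx^N)$. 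Since $\{1,\dots,N\}\subseteq\{1,\dots,M\}$, the fact that conditioning reduces entropy yields $h(x_{\mathcal{A}}\cond x_{\{1,\dots,M\}}) \le h(x_{\mathcal{A}}\cond x_{\{1,\dots,N\}})$, which is exactly \eqref{eq:ess1}.

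For \eqref{eq:ess2} I will telescope on \eqref{eq:ess1}. Setting $a_n \defeq h(\rvx^n) - h(\rvx^{n-1})$ with the convention $h(\rvx^0) = h(\emptyset) = 0$, the case $k=1$ of \eqref{eq:ess1} states precisely that the sequence $(a_n)$ is non-increasing. Writing $h(\rvx^N) = \sum_{n=1}^N a_n$ and $h(\rvx^M) = \sum_{n=1}^M a_n$, the target rearranges as
\[
M\,h(\rvx^N) - N\,h(\rvx^M) = (M-N)\sum_{n=1}^N a_n - N\sum_{n=N+1}^M a_n.
\]
Monotonicity then gives $\sum_{n=1}^N a_n \ge N a_N$ and $\sum_{n=N+1}^M a_n \le (M-N) a_N$, so the right-hand side is at least $(M-N)N a_N - N(M-N)a_N = 0$, which is \eqref{eq:ess2}. (Alternatively one may quote Han's inequality, which under symmetry collapses to the monotonicity of $h(\rvx^k)/k$; the telescoping route has the merit of being self-contained and of reusing \eqref{eq:ess1}.)

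The only delicate point is the first paragraph: one must check that the index sets entering each symmetry relabeling are genuinely disjoint, of the stated sizes, and contained in $\{1,\dots,L\}$. Once that bookkeeping is fixed, both inequalities follow from the chain rule and ``conditioning reduces entropy'' with no further estimation. I also note that no positivity of the $a_n$ is used anywhere, which matters since these are conditional differential entropies and may well be negative.
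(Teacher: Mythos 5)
Your proof is correct, and while your argument for \eqref{eq:ess1} matches the paper's, your argument for \eqref{eq:ess2} is genuinely different. For \eqref{eq:ess1}, the paper does exactly what you do: it uses symmetry to rewrite $h(\rvx^{N+k})-h(\rvx^N)$ as the conditional entropy of a $k$-block given an $N$-block (in the paper's notation, $h(\rvx_1,\ldots,\rvx_k \cond \rvx_{k+1},\ldots,\rvx_{N+k})$) and then invokes the fact that conditioning reduces entropy; your packaging of the hypothesis as ``$h(x_{\mathcal{A}}\cond x_{\mathcal{B}})$ depends only on $|\mathcal{A}|$ and $|\mathcal{B}|$ for disjoint sets'' is a clean formalization of the same step. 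For \eqref{eq:ess2}, the paper proceeds by induction on $L$, with a two-case analysis ($l-N\le N$ versus $l-N>N$) that applies \eqref{eq:ess1} and the induction hypothesis in each branch; you instead set $a_n \defeq h(\rvx^n)-h(\rvx^{n-1})$, observe that the $k=1$ instance of \eqref{eq:ess1} says $(a_n)$ is non-increasing, and telescope, using $\sum_{n=1}^N a_n \ge N a_N$ and $\sum_{n=N+1}^M a_n \le (M-N)a_N$ to conclude $M\,h(\rvx^N)-N\,h(\rvx^M)=(M-N)\sum_{n=1}^N a_n - N\sum_{n=N+1}^M a_n \ge 0$. Your route is shorter and more transparent: it eliminates the induction and case split entirely, makes explicit the classical fact that under symmetry $h(\rvx^n)/n$ is non-increasing (the Han-inequality viewpoint you mention), and correctly highlights that no nonnegativity of the increments is needed---a point worth stressing since these are differential entropies and the paper's proof is silent on it. The paper's induction buys little in comparison; both arguments rest only on the chain rule and conditioning-reduces-entropy, and your bookkeeping caveat (all index sets disjoint and contained in $\{1,\dots,L\}$, which the hypothesis $M+k\le L$ implicit in the statement guarantees) is the only point requiring care in either proof.
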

\begin{proof}
  For $M=N$, the inequalities \eqref{eq:ess1} and \eqref{eq:ess2} hold
  with equality trivially. It is thus without loss of generality to
  assume that $M > N$. 
 
  We first prove inequality \eqref{eq:ess1}. It is readily shown that
  \begin{align}
    h(\rvx^{N+k}) - h(\rvx^{N}) &= h(\rvx_1,\ldots,\rvx_{N+k}) - h(\rvx_1,\ldots,\rvx_N) \\
    &= h(\rvx_1,\ldots,\rvx_{N+k}) - h(\rvx_{k+1},\ldots,\rvx_{N+k})
    \label{eq:tmp999}\\
    &= h(\rvx_1,\ldots,\rvx_{k} \cond \rvx_{k+1},\ldots,\rvx_{N+k}),
  \end{align}%
  where \eqref{eq:tmp999} is from the entropy-symmetry of $\rvx^L$.
  Since the last equality is decreasing with $N\ge0$, \eqref{eq:ess1} is
  immediate. 
  
  For the inequality \eqref{eq:ess2}, we prove it by induction on
  $L$. For $L=2$, the only non-trivial case is $M=2$ and $N=1$, where we have 
  \begin{align}
    2 h(\rvx_1) &= h(\rvx_1) + h(\rvx_2) \\
    &\ge h(\rvx_1,\rvx_2). 
  \end{align}%
  Assume that the result holds to $L=l-1$, i.e., \eqref{eq:ess2} is
  true for any $(M,N)\in\left\{ (j,k):\ l-1\ge j > k \right\}$. We would like to prove that it
  holds for any $(M,N)\in\left\{ (j,k):\ l \ge j > k \right\}$. In
  particular, all we need to prove is that the inequality holds for
  $M=l$ and any $N \le l-1$, i.e., 
  \begin{align}
    l \,h(\rvx^N) - N\,h(\rvx^l) &\ge 0, \quad\forall\,N\le l-1. \label{eq:induction}
  \end{align}%
  To this end, we first write
  \begin{align}
    l\,h(\rvx^N) - N\,h(\rvx^l) &= (l-N)\,h(\rvx^N)  - N\,(h(\rvx^l) -
    h(\rvx^N)). \label{eq:tmp000}
  \end{align}%
  For $N$ such that $l-N \le N$, we can lower-bound the right hand side~(RHS) of \eqref{eq:tmp000} as 
  \begin{align}
    (l-N) h(\rvx^N)  - N (h(\rvx^l) - h(\rvx^N)) &\ge (l-N) h(\rvx^N)  - N (h(\rvx^N) - h(\rvx^{2N-l})) \\
    &= N\, h(\rvx^{2N-l}) - (2N-l) h(\rvx^N)\\
    &\ge 0,
  \end{align}%
  where the first inequality is from the fact that applying
  \eqref{eq:ess1}, $h(\rvx^l) -
  h(\rvx^N) \le h(\rvx^N) - h(\rvx^{2N-l})$;
  the last inequality is from the induction assumption, since $(N,2N-l)$ is such
  that $l-1\ge N\ge 2N-l$. 
  
  For $N$ such that $l-N > N$, we lower-bound the RHS of
  \eqref{eq:tmp000} differently
  \begin{align}
    (l-N) h(\rvx^N)  - N (h(\rvx^l) - h(\rvx^N)) &\ge (l-N) h(\rvx^N)  - N \, h(\rvx^{l-N}) \\
    &\ge 0,
  \end{align}
  where the first inequality is from the fact that applying
  \eqref{eq:ess1}, $h(\rvx^l) - h(\rvx^N) \le h(\rvx^{l-N}) -
  h(\rvx^{0})$ with $h(\rvx^0) = h(\emptyset) = 0$ by definition; the last inequality
  is from the induction assumption since $(l-N,N)$ is such that $l-1 \ge l-N > N$.
  The proof for \eqref{eq:induction} in complete. 

\end{proof} 

By symmetry of the problem, we only need to prove \eqref{eq:tmp1}. We first consider the case $\nA+\nB\leq m$.
\begin{align}
  \nB h(\rvY^n ,\rvZ^n) &= \nB \sum_{t=1}^n h(\rvY_t,\rvZ_t \cond \rvY^{t-1}, \rvZ^{t-1})\\
  &= \nB \sum_{t=1}^n h(\pmb{\omega}_t \cond \rvY^{t-1}, \rvZ^{t-1})\\
  &\le (\nA+\nB) \sum_{t=1}^n h(\rvZ_t \cond \rvY^{t-1}, \rvZ^{t-1})
  \label{eq:tmp638} \\
  &\le (\nA+\nB) \sum_{t=1}^n h(\rvZ_t \cond \rvZ^{t-1}) \\
  &\le (\nA+\nB) h(\rvZ^{n}),
\end{align}
where we define $\pmb{\omega}_t = \{\pmb{y}_t, \pmb{z}_t\}$;
\eqref{eq:tmp638} is the application of \eqref{eq:tmp1}. 

When $m < \nA + \nB$, \eqref{eq:tmp638} is loose. We tighten the
bound as follows. 
\begin{align}
  \nB h(\rvY^n ,\rvZ^n) &= \nB \sum_{t=1}^n h(\pmb{\omega}_t \cond \rvY^{t-1}, \rvZ^{t-1})\\
  &= \nB \sum_{t=1}^n \left( h(\bar{\pmb{\omega}}_t \cond \rvY^{t-1}, \rvZ^{t-1}) + h(\hat{\pmb{\omega}}_t \cond \bar{\pmb{\omega}}_t, \rvY^{t-1}, \rvZ^{t-1}) \right) \\
  &\le \nB \sum_{t=1}^n h(\bar{\pmb{\omega}}_t \cond \rvY^{t-1},
  \rvZ^{t-1}) + \taulog \label{eq:tmp001}\\
  &\le m \sum_{t=1}^n h(\rvZ_t \cond \rvY^{t-1}, \rvZ^{t-1}) + \taulog \\
  &\le m \sum_{t=1}^n h(\rvZ_t \cond \rvZ^{t-1}) + \taulog \\
  &\le m h(\rvZ^{n}) + \taulog,
\end{align}
where we partition ${\pmb{\omega}}_t$ as ${\pmb{\omega}}_t = \{
\bar{\pmb{\omega}}_t,\ \hat{\pmb{\omega}}_t\}$ in such a way that
$\bar{\pmb{\omega}}_t$ and $\tilde{\pmb{\omega}}_t$ are of length $m$
and $\nA+\nB-m$, respectively; \eqref{eq:tmp001} is from the fact that
$h(\hat{\pmb{\omega}}_t \cond \bar{\pmb{\omega}}_t, \rvY^{t-1},
\rvZ^{t-1})\le h(\hat{\pmb{\omega}}_t \cond \bar{\pmb{\omega}}_t)$ 
and that $h(\hat{\pmb{\omega}}_t \cond \bar{\pmb{\omega}}_t)\le \taulog$  
with the same reasoning applied in \eqref{eq:tmp777}-\eqref{eq:tmp888}.


\begin{thebibliography}{10}

\bibitem{maddah2010degrees}
M.A. Maddah-Ali and D.~Tse,
\newblock ``{On the Degrees of Freedom of MISO Broadcast Channels with Delayed
  Feedback},''
\newblock {\em EECS Department, University of California, Berkeley, Tech. Rep.
  UCB/EECS-2010-122, Sep}, pp. 2010--122, 2010.

\bibitem{liu2010multiple}
R.~Liu, T.~Liu, H.V. Poor, and {S. Shamai (Shitz)},
\newblock ``{Multiple-Input Multiple-Output Gaussian Broadcast Channels with
  Confidential Messages},''
\newblock {\em IEEE Trans. on Inform. Theory}, vol. 56, no. 9, pp. 4215--4227,
  2010.

\bibitem{liu2009note}
T.~Liu and S.~Shamai~(Shitz),
\newblock ``A Note on the Secrecy Capacity of the Multiple-Antenna Wiretap
  Channel,''
\newblock {\em IEEE Trans. on Inform. Theory}, vol. 55, no. 6, pp. 2547--2553,
  2009.

\bibitem{khisti2010secure}
A.~Khisti and G.W. Wornell,
\newblock ``Secure Transmission with Multiple Antennas I: The MISOME Wiretap
  Channel,''
\newblock {\em IEEE Trans. on Inform. Theory}, vol. 56, no. 7, pp. 3088--3104,
  2010.

\bibitem{khisti2010secure2}
A.~Khisti and G.W. Wornell,
\newblock ``Secure Transmission with Multiple Antennas--Part II: The MIMOME
  Wiretap Channel,''
\newblock {\em IEEE Trans. on Inform. Theory}, vol. 56, no. 11, pp. 5515--5532,
  2010.

\bibitem{oggier2011secrecy}
F.~Oggier and B.~Hassibi,
\newblock ``The Secrecy Capacity of the MIMO Wiretap Channel,''
\newblock {\em IEEE Trans. on Inform. Theory}, vol. 57, no. 8, pp. 4961--4972,
  2011.

\bibitem{immse09}
R.~Bustin, R.~Liu, H.~V. Poor, and S.~{Shamai (Shitz)},
\newblock ``{An MMSE Approach to the Secrecy Capacity of the MIMO Gaussian
  Wiretap Channel},''
\newblock {\em EURASIP Journal on Wireless Communications and Networking,
  special issue on Wireless Physical Security}, 2009.

\bibitem{yingbin2009compound}
{Y. Liang and G. Kramer and H. V. Poor and S. Shamai (Shitz)},
\newblock ``{Compound Wiretap Channels},''
\newblock {\em EURASIP Journal on Wireless Communications and Networking}, vol.
  2009, 2009.

\bibitem{khisti2011interference}
A.~Khisti,
\newblock ``{Interference Alignment for the Multiantenna Compound Wiretap
  Channel},''
\newblock {\em IEEE Trans. on Inform. Theory}, vol. 57, no. 5, pp. 2976--2993,
  2011.

\bibitem{kobayashi2009compound}
M.~Kobayashi, Y.~Liang, {S. Shamai (Shitz)}, and M.~Debbah,
\newblock ``{On the Compound MIMO Broadcast Channels with Confidential
  Messages},''
\newblock in {\em Proceedings of the IEEE International Symposium on
  Information Theory (ISIT'09), Seoul, Korea.}, 2009, pp. 1283--1287.

\bibitem{kobayashi2010secrecy}
M.~Kobayashi, J.~Piantanida, S.~Yang, and {S. Shamai (Shitz)},
\newblock ``{On the Secrecy Degrees of Freedom of the Multi-Antenna Block
  Fading Wiretap Channels},''
\newblock {\em IEEE Transactions on Information Forensics and Security}, vol.
  6, no. 3, pp. 703--711, Sept. 2011.

\bibitem{vaze2010degrees}
C.S. Vaze and M.K. Varanasi,
\newblock ``{The Degrees of Freedom Region of the Two-User MIMO Broadcast
  Channel with Delayed CSI},''
\newblock {\em Arxiv preprint arXiv:1101.0306}, 2010.

\bibitem{maleki2010retrospective}
H.~Maleki, S.A. Jafar, and S.~Shamai~(Shitz),
\newblock ``{Retrospective Interference Alignment},''
\newblock {\em Arxiv preprint arXiv:1009.3593}, 2010.

\bibitem{vaze2011degrees}
C.S. Vaze and M.K. Varanasi,
\newblock ``{The Degrees of Freedom Region and Interference Alignment for the
  MIMO Interference Channel with Delayed CSI},''
\newblock {\em Arxiv preprint arXiv:1101.5809}, 2011.

\bibitem{abdoli2011degrees}
M.J. Abdoli, A.~Ghasemi, and A.K. Khandani,
\newblock ``{On the Degrees of Freedom of $ K $-User SISO Interference and X
  Channels with Delayed CSIT},''
\newblock {\em Arxiv preprint arXiv:1109.4314}, 2011.

\bibitem{FandT_Jafar}
S.~A. Jafar,
\newblock ``{Interference Alignment: A New Look at Signal Dimensions in a
  Communication Network},''
\newblock {\em Foundations and Trends in Communications and Information
  Theory}, vol. 7, no. 1, pp. 1--136, 2011.

\end{thebibliography}
\end{document}